\newcounter{Bew1}
\newcounter{Bew2}
\tikzstyle{simplenode}=[rounded corners=3pt,draw, fill=blue!20, minimum size=2em]
\tikzstyle{taxipath}=[line width=2pt,blue]
\tikzset{mybrace/.style={decorate,decoration={brace,raise=0.5cm}}}
\newcommand{\bbR}{\mathbb{R}}
\newcommand{\bbN}{\mathbb{N}}
\newcommand{\calA}{\mathcal{A}}
\newcommand{\calB}{\mathcal{B}}
\newcommand{\calC}{\mathcal{C}}
\newcommand{\calD}{\mathcal{D}}
\newcommand{\calF}{\mathcal{F}}
\newcommand{\calH}{\mathcal{H}}
\newcommand{\calN}{\mathcal{N}}
\newcommand{\calS}{\mathcal{S}}
\newcommand{\calT}{\mathcal{T}}
\newcommand{\calU}{\mathcal{U}}
\newcommand{\calV}{\mathcal{V}}
\newcommand{\tup}[1]{\langle #1 \rangle}
\newcommand{\eps}{\varepsilon}
\newcommand{\pdim}{\mathit{P}_{\mathit{dim}}}
\newcommand{\poly}{\mathit{poly}}
\newcommand{\desc}{\mbox{\rm desc}}
\newcommand{\height}{\mbox{\rm height}}
\newcommand{\child}{\mbox{\rm child}}
\newcommand{\SAT}{3{\sc SAT}\xspace}
\newcommand{\TBSAT}{{\sc(3,B2)-SAT}\xspace}
\newcommand{\argmax}{\mbox{\rm argmax}}
\newcommand{\blocks}{\mathit{dev}}
\newcommand{\citet}[1]{\citeauthor{#1} (\citeyear{#1})}
\theoremstyle{plain}
	  \newtheorem{theorem}{Theorem}[section]
	  \newtheorem{corollary}[theorem]{Corollary}
	  \newtheorem{lemma}[theorem]{Lemma}
\theoremstyle{definition}
	  \newtheorem{definition}[theorem]{Definition}
\theoremstyle{remark}
	  \newtheorem{remark}[theorem]{Remark}
\begin{document}
%
\title{Forming Probably Stable Communities with Limited Interactions}
\author{%
   Ayumi Igarashi\\
    Kyushu University\\
    Fukuoka, Japan\\
    igarashi@agent.inf.kyushu-u.ac.jp\\
   \And  Jakub Sliwinski\\
   ETH Zurich\\
   Zurich, Switzerland\\
   jsliwinski@ethz.ch\\
   \And Yair Zick\\
   National University of Singapore\\
   Singapore\\
   zick@comp.nus.edu.sg
}

\maketitle
\begin{abstract}
A community needs to be partitioned into disjoint groups; each community member has an underlying preference over the groups that they would want to be a member of. We are interested in finding a stable community structure: one where no subset of members $S$ wants to deviate from the current structure. 
	We model this setting as a hedonic game, where players are connected by an underlying interaction network, and can only consider joining groups that are connected subgraphs of the underlying graph. We analyze the relation between network structure, and one's capability to infer statistically stable (also known as PAC stable) player partitions from data. We show that when the interaction network is a forest, one can efficiently infer PAC stable coalition structures. 
	Furthermore, when the underlying interaction graph is not a forest, efficient PAC stabilizability is no longer achievable. Thus, our results completely characterize when one can leverage the underlying graph structure in order to compute PAC stable outcomes for hedonic games. Finally, given an unknown underlying interaction network, we show that it is NP-hard to decide whether there exists a forest consistent with data samples from the network.
\end{abstract}

\section{Introduction}\label{sec:intro}
A professor wants her students to complete a group programming project. In order to do so, students should divide into project groups with a few students in each; naturally, some groups will be objectively better than others. 
However, students seldom try to find a group that's objectively optimal for them; they would rather join groups that have at least one or two of their friends. This type of scenario falls into the realm of {\em constrained coalition formation}; in other words, how should we partition a group of people given that \begin{inparaenum}[(a)]\item they have preferences over the groups they are assigned to and \item they have limited interactions with one another?\end{inparaenum}~Other scenarios fitting this description include 
\begin{enumerate}[(a)]
	\item Seating arrangements at a wedding (or at conference banquets): some guests should absolutely not be seated together, while others would probably enjoy one another's company. However, it should always be the case that every guest has at least one acquaintance seated at their table. 
	\item Group formation on social media: given a social media network (e.g. Facebook), people prefer being affiliated with certain groups; however, they are limited to joining groups that already contain their friends.
\end{enumerate}
Constrained coalition formation problems are often modeled as {\em hedonic games}. 
Hedonic games formally capture a simple, yet compelling, paradigm: how does one partition players into groups, while factoring individual players' preferences? The literature on hedonic games is primarily focused on finding ``good'' {\em coalition structures} --- partitions of players into disjoint groups. A set of coalition structures satisfying certain desiderata is called a {\em solution concept}. A central hedonic solution concept is {\em coalitional stability}: given a coalition structure $\pi$, we say that a set of players (also known as a {\em coalition}) $S$ can deviate from $\pi$ if every $i\in S$ prefers $S$ to its assigned group under $\pi$; a coalition structure $\pi$ is stable if no coalition $S$ can deviate. In other words, $S$ contains at least one player $i$ who prefers its current coalition (denoted $\pi(i)$) to $S$.  
The set of stable coalition structures --- also known as the {\em core} of the hedonic game --- may be empty; what's worse, even when it is known to be non-empty, finding a stable coalition structure may be computationally intractable. 
Moreover, efficient algorithms for finding stable coalition structures often assume full knowledge of the underlying hedonic game; that is, in order to work, the algorithm needs to have either {\em oracle access to player preferences} (i.e. queries of the form `does player $i$ prefer coalition $S$ to coalition $T$?'), or {\em structural knowledge of the underlying preference structure} (e.g. some concise representation of player preferences that one can leverage in order to obtain a poly-time algorithm). 

Neither assumption is realistic in practice: eliciting user preferences is notoriously difficult, especially over combinatorially complex domains such as subsets of players. If one forgoes preference elicitation and opts for mathematically modeling preferences (e.g. assuming that users have additive preferences over coalition members), it is not entirely obvious what mathematical model of user preferences is valid. 
This leads us to the following natural question: can we find a stable coalition structure when player preferences are unknown? 
Recent works \cite{balcan2015learning,balkanski2017cost,sliwinski2017hedonic} propose a statistical approach to stability in collaborative environments. 
In this framework, one assumes the existence of user preference data over some coalitions, which is then used to construct {\em probably approximately stable} outcomes (the notion is referred to as PAC stability). In this paper, we explore the relation between structural assumptions on player preferences, and computability of PAC stable outcomes. 

{\bf Our contribution} 
We assume that there exists some underlying {\em interaction network} governing player preferences; that is, players are nodes on a graph, and only connected coalitions are feasible. Within this framework, we show that if player preferences are restricted by a forest, one can compute a PAC stable outcome using only a polynomial number of samples. Surprisingly, even if the underlying forest structure is not known to the learner, PAC stabilizability still holds, despite the fact that it may be computationally intractable to find an approximate forest structure that is likely consistent with the true interaction graph. In contrast, we show that it is impossible to find a PAC stable outcome even if the graph contains a single cycle. The latter result is constructive: we show that whenever the underlying interaction graph does contain a cycle, one can construct a sample distribution for which it would be impossible to elicit a PAC stable outcome. 

Our positive result for forests is interesting in several respects. First, while one can find PAC stable outcomes in polynomial time, computing stable outcomes for hedonic games on forests is computationally intractable~\cite{igarashi2016hedonicgraph}; second, unlike \cite{sliwinski2017hedonic}, we do not require that player preferences are provided in the form of numerical utilities over coalitions. This not only makes our results more general, but also more faithful to the problem we model, which assumes ordinal information about player preferences, rather than cardinal utilities.
Finally, in Section~\ref{sec:infer-tree}, we prove a non-trivial technical result on learning forest structures that is of independent interest. Briefly, we study the following problem: we are given samples of subsets of graph vertices, each labeled either `connected' or `disconnected'; we need to decide whether there exists some forest $T^*$ that is consistent with the sample --- i.e. all connected sets of vertices are connected under $T^*$ and all disconnected sets are not. We show that when all of our vertex samples are connected (i.e. we do not observe any disconnected components), it is possible to efficiently learn an underlying forest structure (if one exists); on the other hand, if one assumes that both connected and disconnected sets are presented to the learner, it is computationally intractable to decide whether there exists a forest, or even a path, that is consistent with the samples.

{\bf Related work} 
There exists a rich body of literature studying hedonic games from an economic perspective (e.g. (Banerjee, Konishi, and S\"{o}nmez 2001; Bogomolnaia and Jackson 2002)). More recently, the AI community has be- gun studying both computational and analytical proper- ties of hedonic games (see e.g. (Aziz and Brandl 2012; Deineko and Woeginger 2013; Gairing and Savani 2010; Peters and Elkind 2015), and (Aziz and Savani 2016; Woeginger 2013) for an overview). Interaction networks in cooperative games were first introduced by Myerson (1977). The relation between graph structure and stability in the classic cooperative game setting is also relatively well-understood. Demange (2004) shows that if the underlying interaction network is a forest, then the core is not empty; further studies (Bousquet, Li, and Vetta 2015; Meir et al. 2013) establish relations between approximate stability and the underlying graph structure, while Chalkiadakis, Greco, and Markakis (2016) study the computational complexity of finding core outcomes in graph restricted environments. Igarashi and Elkind (2016) establish both the existence of stable coalition structures in hedonic games over forests, as well as the computational intractability of finding stable coalition structures; Peters (2016) studies the relation between hedonic solution concepts and the treewidth of the underlying interaction graph.

Several works study learning based game-theoretic solu- tion concepts. Sliwinski and Zick (2017) introduce PAC stability in hedonic games, and analyze several common classes of hedonic games. Other works on learning and game theory include learning in cooperative games (Balcan, Procaccia, and Zick 2015; Balkanski, Syed, and Vassilvitskii 2017), rankings (Balcan, Vitercik, and White 2016), auctions (Bal- can et al. 2012; Balcan, Sandholm, and Vitercik 2018; Morgenstern and Roughgarden 2016) and noncooperative games (Fearnley et al. 2013; Sinha, Kar, and Tambe 2016).

\section{Preliminaries}\label{sec:prelim}
Throughout this paper, vectors are denoted by $\vec x$, and sets are denoted by uppercase letters; given a value $s \in \bbN$, we set $[s]=\{1,\dots,s\}$. A {\em hedonic game} is given by a pair $\tup{N, \succeq}$, where $N = [n]$ is a finite set of {\em players}, and $\succeq = (\succeq_1,\dots,\succeq_n)$ is a list of preferences players in $N$ have over subsets of $N$ (also referred to as {\em coalitions}); in more detail, for every $i \in N$, we write $\calN_i=\{\, S \subseteq N \mid i \in S \,\}$; $\succeq_{i}$ describes a complete and transitive preference relation over $\cal N_i$. 
For each $i \in N$, let $\succ_{i}$ denote the strict preference derived from $\succeq_{i}$, i.e., 
$S \succ_i S^{\prime}$ if $S \succeq_i S^{\prime}$, but $S^{\prime} \not \succeq_{i} S$. 
An outcome of a hedonic game is a {\em coalition structure}, i.e., a partition $\pi$ of $N$ into disjoint coalitions; we denote by $\pi(i)$ the coalition containing $i \in N$. A solution concept is a mapping whose input is a hedonic game $\tup{N,\succeq}$, and whose output is a (possibly empty) set of coalition structures. 
The {\em core} is the most fundamental solution concept in hedonic games. First, we say that a coalition $S$ {\em strongly blocks} a coalition structure $\pi$ if every player $i \in S$ strictly prefers $S$ to its current coalition $\pi(i)$, i.e. $S \succ_i \pi(i)$. A coalition structure $\pi$ is said to be {\em core stable} if no coalition $S \subseteq N$ strongly blocks $\pi$. 
\subsection{Interaction Networks}\label{sec:interactionnetworks}
Given an undirected graph $G = \tup{N,E}$ whose nodes are the player set, we restrict the space of {\em feasible coalitions} to be the set of connected subsets of $G$; we denote by $\calF_E$ the set of feasible coalitions. 
Intuitively, we restrict our attention to coalition structures where all group members form a social subnetwork of the underlying interaction graph. Note that when $G$ is a clique, all coalitions are feasible, and the result is a standard (unrestricted) hedonic game. From now on, we define a {\em hedonic graph game} as the tuple $\tup{N,\succeq,E}$; here, $N$ is the set of players, $\succeq$ their preference relations, and $E$ the edges of the underlying interaction network. We focus our attention only on core stable coalition structures that consist of feasible coalitions.

In what follows, it is useful to express player preferences in terms of cardinal utilities. In other words, player $i$ assigns a value $v_i(S) \in \bbR$ to every coalition $S \in \calN_i$; we write a hedonic game as $\tup{N,\calV}$ where $\calV$ is a collection of functions $v_i: \calN_i \rightarrow \bbR$ for each $i \in N$. This representation allows us to seamlessly integrate ideas from PAC learning into the hedonic games model, and is indeed quite common in other works studying hedonic games. However, as we later show, our main result (Theorem~\ref{thm:hgtrees-stabilizable}) still holds when we transition from a utility-based cardinal model, to a preference-based ordinal model. 

\subsection{PAC Learning}\label{sec:PAClearning}
We provide a brief introduction to PAC learning\footnote{What we show here is but one of many variants on the theory of PAC learning. There are many excellent sources on this classic theory; we refer our reader to~\cite{anthony1999learning,kearns1994introtoclt,shashua2009introtoml}}. The basic idea is as follows: we are given an unknown function $v:2^N \rightarrow \bbR$ (a {\em target concept} in the language of PAC learning) that assigns values to subsets of players. In addition, we are given a set of $m$ samples $((S_1,v(S_1)),\dots,(S_m,v(S_m))$ where $S_j \subseteq N$ and $v(S_j)$ is the valuation of $v$ over $S_j$; we wish to estimate $v$ on subsets we did not observe. We assume that $v$ belongs to a {\em hypothesis class} $\calH$ (say, we know that $v$ is an additive valuation). 
Our goal is to output a {\em hypothesis} $v^* \in \calH$ (e.g. if $v$ is additive, $v^*$ should be as well) that is likely to match the outputs of $v$ on future observations drawn from some distribution $\calD$. More formally, a hypothesis $v^*$ is {\em $\epsilon$ approximately correct} w.r.t a probability distribution $\calD$ over $2^N$ and an unknown function $v$ if 
$$
\Pr_{S\sim \calD}[v^*(S) \ne v(S)] < \epsilon.
$$
A learning algorithm $\calA$ takes as input $m$ samples 
\[
(S_1,v(S_1)), (S_2,v(S_2)),\dots,(S_m,v(S_m))
\]
drawn i.i.d. from a distribution $\calD$ over $2^N$, and two parameters $\epsilon,\delta > 0$. 

A class of functions $\calH$ is $(\epsilon,\delta)$ {\em PAC (probably approximately correctly) learnable} if there exists an algorithm $\calA$ that for any $v \in \calH$ and probability distribution $\calD$ over $2^N$, with probability of at least $1- \delta$, it outputs a hypothesis $v^*$ that is $\epsilon$ approximately correct with respect to $\calD$ and $v$. If this holds for any $\epsilon,\delta>0$, $\calH$ is said to be {\em PAC learnable}; moreover, if the running time of $\calA$, and the number of samples $m$ are polynomial in $\frac1 \epsilon,\log\frac1\delta$ and $n$, $\calH$ is said to be {\em efficiently PAC learnable}.

The value $\delta$ is the {\em confidence parameter}: intuitively, it is the probability that the random samples drawn from $\calD$ do not accurately portray the true sample distribution; for example, if $\calD$ is the uniform distribution, then it is possible (though unlikely) that we draw the same subset in every one of our $m$ samples. The value $\epsilon$ is called the {\em error parameter}: it is the likelihood that our hypothesis $v^*$ does not agree with the target concept $v$. Not all hypothesis classes are efficiently PAC learnable; learnability is inherently related to the complexity of the hypothesis class. The complexity of real-valued functions is commonly measured using the notion of {\em pseudo dimension}~(see e.g. Chapter $11$ of \cite{anthony1999learning}).
Given a list of sets $S_1,\dots, S_m\subseteq N$, and corresponding values $r_1,\dots,r_m\in \bbR$ we say that a class of functions $\calH$ can {\em pseudo-shatter} $\left({S_j,r_j}\right)_{j=1}^m$ if for any labeling $\ell_1,\dots,\ell_m \in \{0,1\}$, there is some $v \in \calH$ such that $v(S_j) \ge r_j$ iff $\ell_j = 1$. The {\em pseudo-dimension} of $\calH$, denoted $\pdim(\calH)$ is 
\[
\max\{\, m \mid \exists \left({S_j,r_j}\right)_{j=1}^m \mbox{ that can be shattered by }\calH\,\}.
\]

The following well-known theorem relates the pseudo-dimension and PAC learnability.

\begin{theorem}[\cite{anthony1999learning}]\label{thm:pdim}
	A class of functions $\calH$ is efficiently $(\epsilon,\delta)$ PAC learnable using $m=\poly(\pdim(\calH),\frac 1 \epsilon,\log \frac 1 \delta)$ samples if there exists an algorithm such that given $m$ samples $(S_1,v(S_1)), (S_2,v(S_2)),\dots,(S_m,v(S_m))$ drawn i.i.d. from a distribution $\calD$, it outputs $v^*\in \calH$ consistent with the sample, i.e. $v^*(S_j) = v(S_j)$ for all sampled $S_j$, and runs in time polynomial in $\frac1 \epsilon,\log\frac1\delta$ and $m$. Furthermore, if $\pdim(\calH)$ is superpolynomial in $n$, $\calH$ is not PAC learnable. 
\end{theorem}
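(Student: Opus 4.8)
The plan is to treat this as the standard characterization of PAC learnability through a combinatorial dimension, adapted to the exact-disagreement error $\Pr_{S\sim\calD}[v^*(S)\neq v(S)]$ used here. The first step is to reduce the real-valued learning problem to ordinary binary classification. Since the setting is realizable ($v\in\calH$), fix the target $v$ and associate to each candidate $v'\in\calH$ its \emph{disagreement region} $C_{v'}=\{\,S\subseteq N : v'(S)\neq v(S)\,\}$, writing $\calC=\{\,C_{v'} : v'\in\calH\,\}$. Learning $v$ to disagreement error below $\epsilon$ is exactly the realizable binary PAC problem for $\calC$, and a hypothesis that is consistent with the sample in the sense of the theorem is precisely one whose disagreement region avoids all sampled points. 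Next I would bound $\vcdim(\calC)$ in terms of $\pdim(\calH)$: writing $g_{v'}=v'-v$, the region $C_{v'}$ is the union $\{\,S:g_{v'}(S)>0\,\}\cup\{\,S:g_{v'}(S)<0\,\}$, and each of these is a super-/sub-level set at threshold $0$, whose set system has VC dimension at most $\pdim(\calH)$ essentially by the definition of pseudo-shattering (take all thresholds $r_j=0$, noting that shifting every function by the fixed $v$ leaves the pseudo-dimension unchanged). Since the VC dimension of a class of unions of two families is linear in the sum of their dimensions, $\vcdim(\calC)=O(\pdim(\calH))$.

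With this reduction the core of the argument is the classical realizable-case uniform convergence bound for a class of VC dimension $d=\vcdim(\calC)$. By the Sauer--Shelah lemma the number of distinct disagreement patterns realizable on any $m$ sample points is $O(m^{d})$, and a double-sampling (symmetrization) argument combined with a union bound over these patterns shows that, for $m=O(\tfrac{1}{\epsilon}(d\log\tfrac{1}{\epsilon}+\log\tfrac{1}{\delta}))=\poly(\pdim(\calH),\tfrac1\epsilon,\log\tfrac1\delta)$, every hypothesis consistent with the sample has true disagreement error below $\epsilon$ with probability at least $1-\delta$. Composing this sample bound with the assumed polynomial-time consistent-hypothesis algorithm yields efficient PAC learnability. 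I expect this uniform convergence step --- specifically the symmetrization argument and the careful routing of the exact-equality loss through $\calC$ --- to be the main technical obstacle; the remaining steps are essentially bookkeeping.

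For the converse I would invoke the matching sample-complexity lower bound. A pseudo-shattered sequence $(S_j,r_j)_{j=1}^{d}$ with $d=\pdim(\calH)$ gives, for every labeling $\ell\in\{0,1\}^{d}$, a realizing function $v_\ell\in\calH$ with $v_\ell(S_j)\ge r_j \iff \ell_j=1$; in particular the two possible values of $v_\ell(S_j)$ straddle $r_j$ and are therefore distinct. Placing the uniform distribution on $\{S_1,\dots,S_d\}$ and drawing the target $v_\ell$ with $\ell$ uniform embeds an information-theoretically hard instance: for any coordinate $j$ absent from the sample, the learner's output $v^*(S_j)$ is a fixed number that can coincide with at most one of the two straddling values, so it errs on $S_j$ with probability at least $\tfrac12$. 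Hence achieving expected error below $\epsilon$ forces the learner to have observed all but a $2\epsilon$-fraction of the $d$ points, giving $m=\Omega(d)$. If $\pdim(\calH)$ is superpolynomial in $n$, then so is the required sample size, contradicting the polynomial-sample requirement of efficient PAC learning, and thus $\calH$ cannot be efficiently PAC learnable.
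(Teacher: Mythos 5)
The paper does not actually prove this statement --- it is imported verbatim from Anthony and Bartlett --- so there is no in-paper argument to measure you against; I can only assess your proposal on its own terms. Your forward direction is the standard and essentially correct route: pass to the disagreement-region class $\calC=\{C_{v'}\}$, note that a set shattered by the super-level sets $\{S: v'(S)\ge v(S)\}$ is exactly a set pseudo-shattered by $\calH$ at the thresholds $r_j=v(S_j)$ (and likewise for the sub-level sets), bound the growth function of the union class by the product of the two growth functions, and feed a consistent hypothesis into the realizable $\epsilon$-net bound. Modulo routine bookkeeping (strict versus weak inequalities at the threshold, and using growth functions directly rather than the VC dimension of the union class), this part is sound.

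The converse is where there is a genuine gap. The key step --- ``for any coordinate $j$ absent from the sample, the learner's output $v^*(S_j)$ is a fixed number that can coincide with at most one of the two straddling values, so it errs with probability at least $\tfrac12$'' --- uses two properties that pseudo-shattering does not supply. First, $v_\ell(S_j)$ need not take only two values as $\ell$ ranges over labelings: the witness for each labeling is merely required to land on the correct side of $r_j$, and different witnesses may take many different values there. Second, and fatally, the real values $v_\ell(S_i)$ observed at the sampled points may carry information about $\ell_j$ beyond $\ell_i$, so the posterior on $\ell_j$ given the sample need not be uniform and $v^*(S_j)$ is not ``fixed'' --- it depends on a sample that may already determine $\ell_j$. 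In the binary case this independence is automatic; for real-valued classes it fails: take a class in which every value $v_\ell(S)$ encodes the entire vector $\ell$ (say in its fractional part), so that a set of size $2^n$ is pseudo-shattered yet a single sample identifies the target exactly and the class is learnable from one example. Hence the randomized-labeling lower bound cannot establish the ``furthermore'' clause for the exact-disagreement loss without an extra hypothesis; what large pseudo-dimension actually rules out in Anthony--Bartlett is uniform convergence, not learnability. The paper's only use of this direction (Theorem~\ref{thm:graph-unlearnable}) survives because there the values on distinct feasible coalitions are completely unconstrained by one another, which is precisely the additional independence your argument silently assumes.
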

In other words, in order to establish the PAC learnability of some hypothesis class, it suffices that one shows that its pseudo dimension is low, and that there exists some efficient algorithm that is able to output a hypothesis $v^*$ which matches the outputs of $v$ on all samples. We note that even if an efficient consistent algorithm does not exist (e.g. if the problem of matching a hypothesis to the samples is computationally intractable), a low pseudo dimension is still desirable: it implies that the number of samples needed in order to find a good hypothesis is polynomial. 

\subsection{PAC Stabilizability}\label{sec:pacstable}
When studying hedonic games, one is not necessarily interested in eliciting approximately accurate user preferences over coalitions using data; in our case, we are interested in identifying core stable coalition structures. Intuitively, it seems that the following idea might work: first, infer player utilities from data and obtain a PAC approximation of the original hedonic game; next, find a coalition structure that stabilizes the approximate hedonic game. This approach, however, may be overcomplicated: first, it may be impossible to PAC learn player preferences from data (this depends on the hypothesis class); moreover, computing a core coalition structure for the learned game may be computationally intractable.  
\cite{sliwinski2017hedonic} propose learning a stable outcome directly from data. They introduce a statistical notion of core stability for hedonic games, which they term {\em PAC stability} (this term was first used by \cite{balcan2015learning} for cooperative transferable utility games).

We say that a partition $\pi$ is $\epsilon$-PAC stable w.r.t. a probability distribution $\calD$ over $2^N$ if 
$$\underset{S \sim \calD}{\Pr}[S \text{ strongly blocks } \pi] < \epsilon.$$ 
The inputs to our learning algorithms will be samples $$(S_1,\vec v(S_1)),(S_2,\vec v(S_2)),\dots,(S_m,\vec v(S_m)),$$ 
where $S_1,\dots,S_m\subseteq N$, and $\vec v(S_j)$ is a vector describing players' utilities over $S_j$; that is, $\vec v(S_j)= (v_i(S_j))_{i \in S_j}$. 

Given an unknown hedonic game $\tup{N,\calV}$ belonging to some hypothesis class $\calH$, a {\em PAC stabilizing algorithm} $\calA$
takes as input $m$ sets $S_1,\dots, S_m$ sampled i.i.d. from a distribution $\calD$, and players' preferences over the sampled sets; in addition, it receives two parameters $\epsilon,\delta > 0$. The algorithm $\calA$ {\em PAC stabilizes} $\calH$, if for any hedonic game $\tup{N,\calV} \in \calH$, distribution $\calD$ over $2^N$, and parameters $\epsilon,\delta > 0$, with probability $\ge 1-\delta$, $\calA$ outputs an $\epsilon$-PAC stable coalition structure if it exists; again, if the running time of the algorithm $\calA$ and the number of samples, $m$, are bounded by a polynomial in $n$, $\frac{1}{\epsilon}$ and $\log \frac{1}{\delta}$, then we say that $\calA$ {\em efficiently PAC stabilizes} $\calH$. 
Similarly, we say that $\calH$ is {\em (efficiently) PAC stabilizable} if there is some algorithm $\calA$ that (efficiently) PAC stabilizes $\calH$.

\section{Learning Hedonic Graph Games}\label{sec:learngraph}
In what follows we consider the following hypothesis class. 
\begin{definition}\label{def:HedonicG}
	For an undirected graph $G = \tup{N,E}$, let $\calH_G$ be the class of all hedonic games $\tup{N,\calV}$ where for each player $i \in N$, $v_i(\{i\})=0$ and player $i$ strictly prefers its singleton to any disconnected coalition $S  \in \calN_i \setminus \calF_E$, i.e., $v_i(S) < 0$ for all $S  \in \calN_i \setminus \calF_E$.
\end{definition}
We first present a baseline negative result: fixing a forest $G$, the hypothesis class, $\calH_G$ is not efficiently PAC learnable. When referring to the PAC learnability of any class of hedonic games, we mean inferring some utility function $v_i^*:2^N \to \bbR$ for all $i \in N$ that PAC approximates the true utilities of players in $N$. This approximation guarantee can be interpreted in both an ordinal and cardinal manner. If we are given player $i$'s ordinal preferences, this simply means that $v_i^*$ is consistent with the ordinal preferences; if we are given player $i$'s cardinal utility function $v_i$, $v_i^*$ should be a PAC approximation of $v_i$. As Theorem~\ref{thm:graph-unlearnable} shows, even when we are given additional information about the underlying graph interaction network, players' preferences are not PAC learnable.
\begin{theorem}\label{thm:graph-unlearnable}
	For any graph $G=\tup{N,E}$ with exponentially many connected coalitions, the class $\calH_G$ is not efficiently PAC learnable.
\end{theorem}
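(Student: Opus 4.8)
The plan is to invoke the second half of Theorem~\ref{thm:pdim}: it suffices to show that $\pdim(\calH_G)$ is superpolynomial in $n$. Concretely, I would exhibit an exponentially large family of pairs $(S_j,r_j)$ that $\calH_G$ can pseudo-shatter. The driving idea is that $\calH_G$ constrains utilities only on the singletons (forced to $0$) and on disconnected coalitions (forced to be strictly negative), leaving the values on connected coalitions completely free; since there are exponentially many connected coalitions, these free values give enormous expressive power.

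First I would locate a single player who participates in exponentially many feasible coalitions. Writing $C = |\calF_E|$ for the number of connected coalitions, a double-counting argument gives $\sum_{i \in N} |\{S \in \calF_E : i \in S\}| = \sum_{S \in \calF_E} |S| \ge C$, so by averaging some player $i^*$ lies in at least $C/n$ feasible coalitions. Since $C$ is exponential in $n$, the quantity $C/n$ — and hence the number $M$ of feasible coalitions containing $i^*$ other than the singleton $\{i^*\}$ — remains exponential, and in particular superpolynomial.

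Next I would use these $M$ coalitions as the shattered points. List them as $S_1,\dots,S_M$ and set every threshold $r_j = 0$. For an arbitrary labeling $\ell \in \{0,1\}^M$ I construct a game in $\calH_G$ as follows: set $v_{i^*}(S_j) = 1$ whenever $\ell_j = 1$ and $v_{i^*}(S_j) = -1$ whenever $\ell_j = 0$; set $v_{i^*}(\{i^*\}) = 0$; assign every coalition in $\calN_i\setminus\calF_E$ the value $-1$ for each player $i$; and fix the remaining connected utilities arbitrarily. Because each $S_j$ is connected, the only constraints imposed by Definition~\ref{def:HedonicG} are respected, so the constructed game belongs to $\calH_G$, and by design $v_{i^*}(S_j) \ge r_j$ iff $\ell_j = 1$. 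This realizes every labeling, yielding $\pdim(\calH_G) \ge M$, which is exponential, so the theorem follows from Theorem~\ref{thm:pdim}.

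The double-counting bound and the explicit construction are routine; the one point that needs care is reconciling the pseudo-dimension machinery of Theorem~\ref{thm:pdim}, stated for a class of real-valued functions, with the vector-valued hedonic-game setting. I would handle this by observing that PAC learning $\calH_G$ entails PAC learning each individual player's utility function, and then applying the pseudo-dimension bound to the single-player class of admissible valuations $v_{i^*}$; holding the other players' utilities fixed throughout the shattering construction makes this reduction clean and ensures the shattered instances genuinely live inside $\calH_G$.
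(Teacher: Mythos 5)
Your proof takes essentially the same route as the paper's: both arguments observe that Definition~\ref{def:HedonicG} leaves the utilities of connected coalitions entirely unconstrained, deduce that an exponential-size family of feasible coalitions is pseudo-shattered, and invoke the lower-bound direction of Theorem~\ref{thm:pdim}. Your version is somewhat more careful than the paper's --- the averaging argument pinning down a player $i^*$ who belongs to exponentially many feasible coalitions, the exclusion of the value-pinned singleton from the shattered set, and the explicit reduction from the vector-valued game class to a single player's valuation class all patch small imprecisions that the paper glosses over --- but the underlying idea is identical.
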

\begin{proof}
Recall that $\calF_E$ is the set of all feasible coalitions over $G=\tup{N,E}$; by assumption, $|\calF_E|$ is exponential. Let $\calH_i$ be the set of all possible utility functions $v_i:\calN_i \rightarrow \bbR$ satisfying $v_i(\{i\})=0$ and $v(S) < 0$ for all disconnected coalition $S \in \calN_i \setminus \calF_E$. The utility player $i$ derives from feasible coalitions in $G$ is unrestricted; in particular, one cannot deduce anything about the utility of some feasible coalition $S\in \calF_E$, based on other feasible coalitions' utilities. This immediately implies that the set $\calF_E$ can be pseudo-shattered by $\calH_i$. Hence $\pdim(\calH_i)$ is at least exponential, and by Theorem~\ref{thm:pdim}, $\calH_i$ is not efficiently PAC learnable.
\end{proof}

As an immediate corollary, forest interaction structures do not admit PAC learnable preference structures in general; this is true even if $G$ is a star graph over $n$ players, since the number of feasible coalitions is exponential in $n$.
\begin{corollary}\label{cor:star}
	Let $G$ be a star graph over $n$ players; then $\calH_G$ is not PAC learnable.
\end{corollary}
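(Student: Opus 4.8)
The plan is to derive the corollary as a direct specialization of Theorem~\ref{thm:graph-unlearnable}, whose hypothesis asks only that the underlying graph possess exponentially many connected coalitions. The entire argument therefore reduces to a single counting step: I would first enumerate the feasible coalitions $\calF_E$ of a star graph and show that $|\calF_E|$ is exponential in $n$.

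To carry this out, let $c$ denote the center of the star and $\ell_1,\dots,\ell_{n-1}$ its leaves. The key observation is that a subset $S \subseteq N$ is connected in the star exactly when $S$ is a single leaf or $S$ contains the center. Since every leaf is adjacent to $c$, the set $\{c\} \cup L$ is connected for every $L \subseteq \{\ell_1,\dots,\ell_{n-1}\}$, which already exhibits $2^{n-1}$ distinct connected coalitions. Hence $|\calF_E| \ge 2^{n-1}$, and in particular $|\calF_E|$ is exponential in $n$, so the hypothesis of Theorem~\ref{thm:graph-unlearnable} is met.

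Invoking Theorem~\ref{thm:graph-unlearnable} immediately yields that $\calH_G$ is not efficiently PAC learnable. To obtain the stronger conclusion stated in the corollary, namely that $\calH_G$ is not PAC learnable at all, I would note that the proof of Theorem~\ref{thm:graph-unlearnable} in fact establishes that $\pdim(\calH_i) \ge |\calF_E| \ge 2^{n-1}$, which is superpolynomial in $n$. The final clause of Theorem~\ref{thm:pdim} then delivers non-learnability outright, not merely the failure of efficient learnability.

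I anticipate no genuine obstacle: the result is a clean instantiation of the general theorem, and the only mathematical content is the counting bound for the star. The single point requiring care is the gap between the conclusion of Theorem~\ref{thm:graph-unlearnable} (\emph{not efficiently} PAC learnable) and that of the corollary (\emph{not} PAC learnable); I would bridge it by appealing to the superpolynomial pseudo-dimension bound via Theorem~\ref{thm:pdim} rather than to the efficiency-flavored phrasing of the theorem statement.
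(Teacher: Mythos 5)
Your proposal is correct and follows essentially the same route as the paper: count the $2^{n-1}$ coalitions containing the center to verify the exponential-feasibility hypothesis, then invoke Theorem~\ref{thm:graph-unlearnable}. Your extra step bridging the gap between ``not \emph{efficiently} PAC learnable'' and ``not PAC learnable'' via the superpolynomial pseudo-dimension bound and the final clause of Theorem~\ref{thm:pdim} is a point the paper's own proof silently elides, so your version is, if anything, slightly more careful.
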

\begin{proof}
For a star with $n$ nodes, any coalition containing the center of the star is feasible, hence it has $2^{n-1}$ feasible coalitions. By Theorem \ref{thm:graph-unlearnable}, hedonic games on forests are not PAC learnable.
\end{proof}

The reason that hedonic games with forest interaction structures are not PAC learnable is that they may have exponentially many feasible coalitions; this is also the reason that finding a core stable coalition structure for hedonic games with forest interaction structures is computationally intractable~\cite{igarashi2016hedonicgraph}. However, we now show how one can still exploit the structural properties of forest graph structures to efficiently compute PAC stable outcomes.

\section{PAC Stabilizability of Hedonic Graph Games}\label{sec:hg-stable}
Having established that hedonic games with a forest interaction structure are not, generally speaking, PAC learnable, we turn our attention to their PAC stabilizability. We divide our analysis into two parts. We begin by assuming that the underlying interaction graph structure $G$ is known to us; in other words, we know that our game belongs to the hypothesis class $\calH_G$. In Section~\ref{sec:infer-tree}, we show how one can forgo this assumption. 

\begin{theorem}\label{thm:hgtrees-stabilizable}
	If $G = \tup{N,E}$ is a forest, $\calH_G$ is efficiently PAC stabilizable.
\end{theorem}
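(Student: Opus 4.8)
The plan is to compute, directly from the sample, a coalition structure $\hat\pi$ into connected (feasible) coalitions that (i) gives every player non-negative utility and (ii) is strongly blocked by none of the sampled coalitions, and then to argue that any such $\hat\pi$ is $\epsilon$-PAC stable with high probability. The first step is a reduction that removes disconnected coalitions from consideration. If $\pi$ is any partition in which $v_i(\pi(i)) \ge 0$ for every $i$, then no disconnected coalition can strongly block $\pi$: by Definition~\ref{def:HedonicG} every member $i$ of a disconnected $S$ has $v_i(S) < 0 \le v_i(\pi(i))$, so $i$ does not strictly prefer $S$. Hence it suffices to guard against connected blockers, and the all-singletons structure (value $0$ for everyone) witnesses that a value-$\ge 0$ structure always exists.

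Next I would construct the output. Let $\mathcal{O} = \{S_1,\dots,S_m\} \cap \calF_E$ be the observed feasible coalitions. Define a modified game $\langle N, \hat{\calV}\rangle$ on the same forest by keeping $\hat v_i(S) = v_i(S)$ for every $S \in \mathcal{O}$ and every singleton, setting $\hat v_i(S) = -M$ (with $M$ larger than every observed value) for every connected $S$ that was not observed, and keeping disconnected coalitions at their negative values. This game lies in $\calH_G$, so by \citet{igarashi2016hedonicgraph} its core is non-empty; let $\hat\pi$ be a core partition. Since every unobserved non-singleton connected coalition has value $-M < 0$, a player placed in such a coalition would strictly prefer its (feasible) singleton, which would then block $\hat\pi$ --- a contradiction. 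Thus every block of $\hat\pi$ is a singleton or a member of $\mathcal{O}$, on which $\hat v$ coincides with $v$; in particular every player receives utility $\ge 0$, and because $\hat\pi$ is unblocked in the modified game while $\hat v = v$ on $\mathcal{O}$, no coalition of $\mathcal{O}$ strongly blocks $\hat\pi$ in the true game. Combined with the reduction, $\hat\pi$ is strongly blocked by none of the $m$ samples.

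For generalization I would exploit that the forest structure keeps the relevant hypothesis space small. Every partition of $G$ into connected coalitions is obtained by deleting a subset of the at most $n-1$ forest edges, so there are at most $2^{n-1}$ such partitions. Fix any connected partition $\pi$ whose true blocking probability $\Pr_{S\sim\calD}[S \text{ blocks } \pi]$ is at least $\epsilon$; the probability that none of $m$ independent samples blocks $\pi$ is at most $(1-\epsilon)^m \le e^{-\epsilon m}$. A union bound over the $\le 2^{n-1}$ connected partitions shows that the probability that some partition with true blocking probability $\ge \epsilon$ is left unblocked by the whole sample is at most $2^{n-1} e^{-\epsilon m}$, which falls below $\delta$ once $m \ge \frac1\epsilon\big((n-1)\ln 2 + \ln\frac1\delta\big)$. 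As $\hat\pi$ is a connected partition left unblocked by every sample, with probability at least $1-\delta$ its true blocking probability is below $\epsilon$; that is, $\hat\pi$ is $\epsilon$-PAC stable, using a number of samples polynomial in $n$, $\frac1\epsilon$ and $\log\frac1\delta$.

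The main obstacle is the efficiency of computing $\hat\pi$: the core-finding dynamic program of \citet{igarashi2016hedonicgraph} is intractable in general precisely because a tree node may lie in exponentially many connected coalitions. Here the $-M$ penalty renders all unobserved non-singleton coalitions irrelevant, so at each node only the polynomially many coalitions induced by $\mathcal{O}$ together with singletons need to be propagated through the recursion; I would argue that restricting their dynamic program to this polynomial family of candidate coalitions preserves correctness while yielding running time polynomial in $n$ and $m$. Finally, I would observe that every comparison used --- both in the core computation and in testing whether a sample blocks $\hat\pi$ --- has the form $v_i(\cdot) \gtrless v_i(\cdot)$, so the entire argument goes through with purely ordinal preference information, delivering the ordinal strengthening promised in the text.
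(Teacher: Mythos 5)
Your proof is correct, and the partition you construct is essentially the one the paper's Algorithm~1 computes --- penalizing unobserved connected coalitions by $-M$ and running the Demange-style bottom-up recursion yields exactly the candidate sets $\calS^*$ of line~5 --- but your stability analysis takes a genuinely different route. The paper argues per player: with $m=\lceil\frac{n}{\epsilon}\log\frac{n}{\delta}\rceil$ samples, each node $i$ sees a ``top-$\frac{\epsilon}{n}$'' coalition with respect to a modified preference $\succeq_{\calB_i}$, and it needs a union bound over the $n$ players plus a careful reordering argument to justify reusing the same sample for $B_i$ and for the $B_j$'s of $i$'s descendants. You instead note that the output always lies in the finite class of partitions of the forest into connected blocks (at most $2^{n-1}$ of them, one per subset of deleted edges) and is consistent with the data in the sense that no sampled coalition blocks it, so a union bound over that hypothesis class gives $\epsilon$-PAC stability with $m=O(\frac{1}{\epsilon}(n+\log\frac{1}{\delta}))$ samples. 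This is cleaner --- it sidesteps the sample-reuse subtlety entirely and gives a slightly better sample bound --- at the cost of using the forest structure twice (for the existence of a consistent connected partition via the non-empty core, and for the $2^{n-1}$ count), whereas the paper's per-player argument depends only on the bottom-up structure of the algorithm. The one step you leave informal, polynomial-time computation of the core of the penalized game, does go through: the $-M$ penalty ensures the recursion never selects an unobserved non-singleton coalition, so each node considers only $O(m+n)$ candidates and correctness is inherited from the full-information procedure.
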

\begin{proof}
We claim that Algorithm~\ref{alg:tree} PAC stabilizes $\calH_G$. It is related to the algorithm introduced in \citet{demange2004stability} used to find core stable outcomes for forest-restricted hedonic games\footnote{\citet{demange2004stability} presents the algorithm for non-transferable cooperative utility games on trees where each coalition has a choice of action. A hedonic game is a special case of a non-transferable utility game where each coalition has a unique action.} in the full information setting. Intuitively, instead of identifying the {\it guaranteed coalition} for each player precisely, Algorithm~\ref{alg:tree} approximates it. If the input graph $\tup{N, E}$ is a forest, we can process each of its connected components separately, so we can assume that $\tup{N, E}$ is a tree.
	
We first provide an informal description of our algorithm, followed by pseudocode. The algorithm first transforms $\tup{N,E}$ into a rooted tree with root $r$ by orienting the edges in $E$ towards the leaves. For every player $i$ starting from the bottom to the top, the algorithm identifies $B_i$ - a coalition containing $i$, the best for $i$ observed in the samples that is entirely contained in $i$'s subtree, such that others in $B_i$ prefer it to their own best guaranteed coalition; in other words, $B_i \succeq_j B_j$ for all $j \in B_i$. Having identified $B_i$ for every $i \in N$, players are partitioned according to the $B_i$'s from top-down. The main concern is to ensure that $B_i$ is a good approximation of its full-information counterpart; this is guaranteed by taking a sufficiently large sample size $m=\lceil\frac{n}{\epsilon}\log\frac{n}{\delta}\rceil$. 
	
In what follows, we assume an orientation of the trees in $G$, with arbitrary root nodes. Fixing the orientation, we let $\desc(i)$ be the set of descendants of $i$ (we assume that $i \in \desc(i)$). 
For each coalition $S \subseteq N$, we denote by $\child(S)$ the set of {\em children} of $S$, namely, 
\[
\child(S)=\{\, i \in N \setminus S \mid~\mbox{$i$'s parent belongs to}~S \,\}.
\]
The {\it height} of a node $i \in N$ is defined inductively as follows: $\height(i):=0$ if $i$ is a leaf, i.e., $\desc(i)=\{i\}$, and 
\[
\height(i):=1+\max \{\, \height(j)\mid j \in \desc(i) \setminus  \{i\} \,\}, 
\]
otherwise.
	
\begin{algorithm}[h]
	\begin{algorithmic}[1]
		\STATEx \textbf{Input:} set $\calS$ of $m=\lceil\frac{n}{\epsilon}\log\frac{n}{\delta}\rceil$ samples from $\cal D$
		\STATEx \textbf{Output:} a partition $\pi=\pi^{(r)}$ of $N$
		\STATE Make a rooted tree with root $r$ by orienting all the edges in $E$ towards the leaves.
		\STATE Initialize $B_i \gets \emptyset$ and ${\pi}^{(i)} \gets \emptyset$ for each $i \in N$.
		\FOR{$t=0,\ldots,\height(r)$}
		\FOR{each node $i \in N$ with $\height(i)=t$}
		\STATE set $\calS^*=\{\, S \in \calS \cap \calF_E \mid i \in S \subseteq \desc(i)~\land~v_j(S) \ge v_j(B_j),~\mbox{for all}~ j \in S \setminus \{i\} \,\}\cup\{\{i\}\}$ \label{step:guarantee} 
		\STATE choose $B_i \in \argmax \{\, v_i(S) \mid S \in \calS^* \cup \{\{i\}\}\,\}$ \label{step:best}
		\STATE set $\pi^{(i)} \gets \{B_i\} \cup \bigcup \{\, {\pi}^{(j)} \mid j \in \child(B_i) \,\}$
		\ENDFOR
		\ENDFOR
	\end{algorithmic}
	\caption{An algorithm finding a PAC stable outcome for forest-restricted games}\label{alg:tree}
\end{algorithm}

Given player $i$, let $\calB_i$ be the collection of coalitions $B_j$ for every descendant $j \neq i$ of $i$, i.e., $\calB_i=\{\, B_j \mid j \in \desc(i)\setminus \{i\}\,\}$. 
	For each $i \in N$ and each coalition $X \subseteq N$, we let $P_{\calB_i}(X)$ mean that $i \in X\subseteq \desc(i)$, $X$ is connected, and every other player $j$ in $X\setminus \{i\}$ weakly prefers $X$ to $B_j$. Now, we define a modified preference order for player $i$, $\succeq_{\calB_i}$, that devalues any coalition $X$ for which $P_{\calB_i}(X)$ does not hold.
	\begin{itemize}
		\item If $P_{\calB_i}(X)$ and $P_{\calB_i}(Y)$, then $X \succ_{\calB_i} Y \iff X \succ_i Y$
		\item If $P_{\calB_i}(X)$ but $\neg P_{\calB_i}(Y)$, then $X \succ_{\calB_i} Y$
		\item If $\neg P_{\calB_i}(X)$, then $\forall Y: Y \succeq_{\calB_i} X$
	\end{itemize}

	Given $\calB_i$ and a distribution $\calD$, we say that a coalition $X$ is {\em top-$\frac{\epsilon}{n}$} for player $i$, if 
	\[
	\underset{S \sim \calD}{\Pr}[S \succ_{\calB_i} X] \le \frac{\epsilon}{n}.
	\]
	Trivially, for every $\calB_i$ the probability of sampling a top-$\frac{\epsilon}{n}$ coalition for player $i$ from $\calD$ is at least $\frac{\epsilon}{n}$; moreover, if $\underset{S \sim \calD}{\Pr}[P_{\calB_i}(S)] \le \frac{\epsilon}{n}$, then any coalition is top-$\frac{\epsilon}{n}$.
	
	Intuitively, $B_i$ approximates the best coalition $i$ can form with members of the subtree rooted at $i$. Algorithm~\ref{alg:tree}'s objective is to ensure that sampling a coalition $S$ from $\calD$ such that $P_{\calB_i}(S) \land S \succ_i B_i$ is unlikely, namely, the probability of seeing $S$ from $\calD$ such that $S$ is better for the highest node $i$ in $S$ than $B_i$, and every other player in $S$ prefers it to their $B_j$, is smaller than $\epsilon$; this is done by examining enough coalitions so as to see some top-$\frac{\epsilon}{n}$ coalition for every player.
	
	Examine what happens if $\calB_i$ containing $B_j$'s for $i$'s descendants is fixed upfront, i.e. not dependent on the sample. Let us bound the probability that for $i$, none of the coalitions in $\calS$ are top-$\frac{\epsilon}{n}$:
	\begin{align}\label{eq1}
	&\left(1-\frac{\epsilon}{n}\right)^m = \left( 1-\frac{\epsilon}{n}\right)^{\lceil\frac{n}{\epsilon}\log\frac{n}{\delta} \rceil } \\
	&\le \left( \left( 1-\frac{\epsilon}{n}\right)^{\frac{n}{\epsilon}} \right) ^{\log\frac{n}{\delta}}< \left( \frac{1}{e} \right)^{\log\frac{n}{\delta}} < \frac{\delta}{n} \nonumber
	\end{align}
	Note that Inequality~\eqref{eq1} is true irrespective of what $\calB_i$ is. Taking a union bound, the probability that there is some player $i$ such that there is no top-$\frac{\epsilon}{n}$ coalition for $i$ in $\calS$ is at most $\delta$. Note that $\calS^*$ can end up not containing any coalition (line \ref{step:guarantee}). But then with high confidence, as a special case of the above consideration, every coalition is top-$\frac{\epsilon}{n}$, and the algorithm can pick $\{i\}$.
	
	Recall that in an actual run of the algorithm the sample $\calS$ is drawn, and for every descendant $j$ of $i$, $B_j$ is computed based on $\calS$, and then $B_i$ is computed based on the same sample. One can ask whether some dependence between the computation of $B_i$ and the $B_j$'s does not invalidate Inequality~\eqref{eq1}. This potential problem can be easily solved by taking a larger number of samples: if we take $m = \lceil\frac{n^2}{\epsilon}\log \frac n\delta \rceil$ samples, we can just use $\frac{n}{\epsilon}\log\frac{n}{\delta}$ samples to compute each $B_i$ and maintain complete independence in the samples. 
	
	In order to see the smaller sample size used in Algorithm~\ref{alg:tree} provides the same guarantee, consider an equivalent reordering of the computation of $B_i$ and $B_j$'s: first, for every $i \in N$, determine the number $k$ of connected coalitions $S$ in the sample such that $i$ will be the highest node in $S$. Then, draw the other $m-k$ coalitions and compute $B_j$'s for every descendant $j$ of $i$; finally, based on this, determine the family $\calB_i$. Note that regardless of what $\calB_i$ is, each of the undetermined, independently drawn $k$ coalitions has probability of at least $\frac{\epsilon}{n}$ to be top-$\frac{\epsilon}{n}$ for $i$. Hence, the inequality \eqref{eq1} holds even if $\calB_i$ and $B_i$ are computed based on the same sample of coalitions $\calS$.
	
	We are now ready to prove that the coalition structure outputted by Algorithm~\ref{alg:tree} returns a PAC stable outcome $\pi^{(r)}$. We observe that any coalition included in the returned $\pi^{(r)}$ is a $B_i$ for some $i$. 
	Note that for every $j \in B_i$, we have that $v_j(B_i) \ge v_j(B_j)$ (line \ref{step:guarantee}). 
	Now, consider any coalition $X$ that strongly blocks $\pi^{(r)}$; let $i = \argmax_{j \in X} \height(j)$. 
	Since $X$ strongly blocks $\pi^{(r)}$, 
	\[
	v_j(X) > v_j(\pi^{(r)}(j)) \ge v_j(B_j)
	\]
	for all players $j \in X$.
	In particular, $v_i(X) > v_i(B_i)$. By construction of $B_i$ and Inequality~\eqref{eq1}, $B_i$ is top-$\frac{\epsilon}{n}$ for $i$; that is, $$\frac{\epsilon}{n}> \Pr_{S\sim \calD}[v_i(S) > v_i(B_i)] \ge \Pr_{S\sim \calD}[S=X];$$ thus the probability of drawing a coalition such as $X$ from $\calD$, i.e. strongly blocking $\pi^{(r)}$ and having $i = \argmax_{j \in X} \height(j)$, is less than $\frac{\epsilon}{n}$. Taking a union bound over all players, $$\underset{X \sim \calD}{\Pr}[X \text{ strongly blocks }\pi^{(r)}] < \epsilon;$$ this guarantee holds with confidence $1 - \delta$.
\end{proof}

We conjecture that a similar argument can imply a stronger statement. That is, we can replace `strongly block' in the definition of PAC stabilizability with `weakly block' and still obtain PAC stabilizability on trees. (A coalition $S$ weakly blocks a coalition structure if every player weakly prefers $S$ to their current coalition and at least one player in $S$ has a strict preference) We note that in the full information setting, a strict core outcome does not necessarily exist on trees \cite{igarashi2016hedonicgraph}.
\begin{remark}[From Cardinal to Ordinal Preferences]
	Note that step \ref{step:best} of the Algorithm \ref{alg:tree} is the {\em only} step that refers to the numerical representation of agent preferences $v_i$. The algorithm chooses a coalition with maximal utility value $v_i$ out of some set of possible coalitions; in particular, the only thing required for the successful implementation of Algorithm~\ref{alg:tree} is players' ranking of coalitions in the sample. In other words, the particular numerical representation of player preferences plays no role. This is a significant departure from the algorithms devised by \cite{sliwinski2017hedonic}, where the type of utility representation functions used was crucial for PAC stability.
\end{remark}
Next, we show that Theorem~\ref{thm:hgtrees-stabilizable} is `tight' in the sense that if the graph $G$ contains a cycle, $\calH_G$ is not PAC stabilizable. 
\begin{theorem}\label{thm:notforest-notstable}
	Given a non-forest graph $G = \tup{N,E}$, the class $\calH_G$ is not PAC stabilizable.
\end{theorem}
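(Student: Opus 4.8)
\emph{Overall strategy.} I would prove the impossibility information-theoretically, by a two-point (indistinguishability) argument. The goal is to exhibit a single distribution $\calD$ together with two games $\Gamma_A,\Gamma_B\in\calH_G$ that agree on the value vector of \emph{every} coalition in $\mathrm{supp}(\calD)$, so that the labeled samples $(S_j,\vec v(S_j))$ have exactly the same distribution under both games and no learner — regardless of how many samples it draws — can tell them apart, yet whose sets of $\epsilon$-PAC stable coalition structures are disjoint. Since a PAC stabilizing algorithm must output, with probability $\ge 1-\delta$, a structure in the (nonempty) PAC stable set of the true game, and since on identically distributed inputs any (even randomized) algorithm induces the same output distribution, no algorithm can place probability $\ge 1-\delta>\tfrac12$ on each of two disjoint target sets. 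Taking $\delta<\tfrac12$ and a suitably small $\epsilon$ then yields the theorem.

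\emph{Reduction to a cycle.} Because $G$ is not a forest it contains an induced cycle $C$. I make every vertex outside $C$ inert by setting its singleton value to $0$ and giving it a strictly negative value for every other coalition it belongs to; this is permitted in $\calH_G$, since disconnected coalitions are already forced negative and feasible ones may take arbitrary values. Such a player is a singleton in every stable structure and never belongs to a strongly blocking coalition, so the game effectively lives on $C$, and it suffices to build $\calD,\Gamma_A,\Gamma_B$ on the cycle with $\calD$ supported on coalitions inside $C$.

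\emph{Construction on the cycle.} On $C$ I put $\calD$ on a family of ``probe'' coalitions (the edges, and, in the delicate cases, longer arcs), each with probability at least $\epsilon$, while leaving the coalitions that will form the candidate stable structures \emph{outside} $\mathrm{supp}(\calD)$, hence unobserved. The crucial ingredient is to choose the players' preferences over the probe coalitions so that they exhibit the classical cyclic ``empty-core'' phenomenon separating cycles from forests: every coalition structure built solely from probe coalitions and singletons is strongly blocked by some probe coalition, so it is blocked with probability $\ge\epsilon$ and is therefore not $\epsilon$-PAC stable and detectably so from the sample. Consequently no observable structure can be a valid output. The genuinely stable structures must then use the unobserved coalitions, and I define $\Gamma_A,\Gamma_B$ to agree on all probe values but to assign the hidden values oppositely, so that in $\Gamma_A$ a structure $\pi_A$ using the hidden coalitions is even core stable (its members find the hidden coalitions so attractive that they refuse every probe deviation), while in $\Gamma_B$ a disjoint structure $\pi_B$ is core stable and $\pi_A$ is blocked, and vice versa.

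\emph{Main obstacle.} The hard part is this last step: I must simultaneously (i) realize an empty-core-type instability among the probe coalitions so that nothing observable is stable, and (ii) keep at least two hidden coalitions ``independent,'' so that two mutually exclusive stable structures — indeed disjoint stable sets — can be toggled purely through unobserved values. These pull against each other, because enriching the probe set to destroy all observable stable structures tends to make some probe-based structure auto-stable: a structure built from attractive, observed coalitions cannot be blocked and is thus stable irrespective of the hidden values, giving a common output that defeats the argument. Threading this needle is where the cyclic structure of $C$ is indispensable, and where the length and parity of $C$ must be handled with care — on an odd cycle a cyclic preference over edges already leaves no stable matching, so edges alone suffice as probes, whereas on an even cycle a stable matching always exists and one must instead enrich the probes (e.g.\ with longer arcs arranged in an odd rotational conflict) to recreate the instability. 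Once the cycle gadget is in place, the reduction lifts it to an arbitrary non-forest $G$ and the two-point argument closes the proof.
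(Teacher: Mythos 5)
Your high-level framework is exactly the one the paper uses: a two-point indistinguishability argument on a cycle $C$, with a distribution $\calD$ supported on ``probe'' coalitions over which two games agree, while the candidate stable structures live outside $\mathrm{supp}(\calD)$ and are toggled oppositely so that the two games' sets of $\epsilon$-PAC stable partitions are disjoint. Your formalization of why this kills any (even randomized) algorithm with $\delta<\tfrac12$ is correct and, if anything, slightly more careful than the paper's. The reduction of a general non-forest $G$ to its cycle by making outside players prefer their singletons is also fine (you do not even need the cycle to be induced, since all coalitions used are arcs of $C$ or $C$ itself, which are connected in $G$ regardless).

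The genuine gap is that you never exhibit the gadget, and your ``main obstacle'' paragraph shows the crux is unresolved: you correctly identify the tension between (i) making every probe-based structure blocked and (ii) keeping two disjoint hidden stable sets, but you leave it as a needle to be threaded, with a parity case analysis (odd versus even cycles, stable matchings) that suggests you are restricting probes to edges. The paper's resolution is short and parity-free: take $S_1=\{1,2\}$, $S_2=\{2,3\}$, $S_3=\{3,\dots,k,1\}$ (two edges plus one long arc, pairwise intersecting in the single pivots $2$, $3$, $1$), let $\calD$ be uniform on these three, and reveal only the cyclic pattern $S_1\succ_1 S_3$, $S_2\succ_2 S_1$, $S_3\succ_3 S_2$. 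The two consistent completions are: $\Gamma_1$, where each pivot $i$ most prefers $\{i\}$ (so a stable output must contain some singleton $\{i\}$, $i\in\{1,2,3\}$), and $\Gamma_2$, where everyone in $C$ most prefers $C$ (so a stable output must contain $C$); these are mutually exclusive, and in either game any partition violating the respective requirement is strongly blocked by one of the three probes, i.e.\ with probability $\ge\tfrac13$. Note that both hidden ``escape hatches'' (singletons and the grand cycle) are available simultaneously and do not interfere with the probes, which dissolves the tension you worried about; your even-cycle concern only arises if you insist on probing with edges alone rather than allowing a long arc. Without this (or an equivalent) explicit construction, the proposal is a correct plan but not yet a proof.
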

\begin{proof}
Since $G$ is not a forest, there is a cycle in $G$. Without of loss of generality, let $C=\{1,2,\ldots,k\}$ be a cycle with $\{i,i+1\} \in E$ for all $i=1,2,\ldots,k-1$, and $\{k,1\} \in E$. Let $S_1 = \{1, 2\}$, $S_2 = \{2, 3\}$, $S_3 = \{3,\ldots, k,1\}$. Suppose $\calD$ is the uniform distribution on $\{S_1, S_2, S_3\}$ and that the following holds:
	\begin{align} 
	S_1 \succ_{1} S_3, S_2 \succ_{2} S_1, S_3 \succ_{3} S_2.\label{eq:hg-unstab}
	\end{align}
	In this case, nothing beyond \eqref{eq:hg-unstab} can be deduced about the game by examining samples from $\calD$. Consider the following games satisfying \eqref{eq:hg-unstab}:
	\begin{itemize}
		\item A game $\Gamma_1$ where every player $i\in \{1,2,3\}$ strictly prefers $\{i\}$ to any other coalition, and any non-singleton coalition is less preferred than $S_i$ and $S_{i-1}$, namely, $\{i\}\succ_{i}  S_i \succ_{i} S_{i-1} \succ_{i} S^{\prime}$ for any $S^{\prime} \in \calN_i \setminus \{S_i,S_{i-1},\{i\}\}$. Here we set $S_0=S_3$. 
		Every player $j \in S_3$ strictly prefers $S_3$ to any other coalition.
		\item A game $\Gamma_2$ where every player in $C$ strictly prefers $C =\{1, 2, ... , k\}$ to any other coalition, and every player $i\in \{1,2,3\}$ strictly prefers $S_i$ to any coalition other than $C$. Every player $j \in S_3$ strictly prefers $S_3$ to any other coalition other than $C$.
	\end{itemize}
	Suppose towards a contradiction that there is an algorithm $\calA$ that returns a $\frac{1}{3}$-PAC stable partition $\pi$. We will show that for $\pi$ to be resistant against deviations supported by $\cal D$, $\pi$ has to include $\{1\}$ or $\{2\}$ or $\{3\}$ for the first game, and $C$ for the second game, which implies that it is impossible to achieve $\eps <\frac{1}{3}$ with any confidence $1-\delta > 0$.
	
	\begin{itemize}
	\item Consider the first game $\Gamma_1$. Suppose for a contradiction that no player $i \in \{1, 2, 3\}$ forms a singleton. We will show that at least one of $S_1$, $S_2$, and $S_3$ would strongly block $\pi$ with probability $1$. The claim is clear when no player $i \in \{1, 2, 3\}$ belongs to $S_i$; thus suppose at least one of $S_i$ is formed. Then we have the following three cases.
\begin{itemize}
\item If $\pi(1) = S_1$, $\pi(2) = S_1$, and $\pi(3) \neq \{3\}$, then players in $S_2$ strictly prefer $S_2$ to their own coalitions.
\item If $\pi(2) = S_2$, $\pi(3) = S_2$, and $\pi(1) \neq \{1\}$, then players in $S_3$ strictly prefer $S_3$ to their own coalitions.
\item If $\pi(j)=S_3$ for all $j \in S_3$, and $\pi(2?=\{2\}$, then players in $S_1$ strictly prefer $S_1$ to their own coalitions.
\end{itemize}
In either case, $\pi$ is strongly blocked with probability at least $\frac{1}{3}$, a contradiction.
	\item Consider the second game $\Gamma_2$. Suppose for a contradiction that the coalition $C$ is not formed, i.e., $C \not \in \pi$. Again, at least one of $S_i$ is formed as otherwise $\pi$ would not be resistant against deviations supported by $\calD$. Now we have the following three cases.
	\begin{itemize}
\item If $\pi(1) = S_1$, $\pi(2) = S_1$, then players in $S_2$ strictly prefer $S_2$ to their own coalitions.
\item If $\pi(2) = S_2$, $\pi(3) = S_2$, then players in $S_3$ strictly prefer $S_3$ to their own coalitions.
\item If $\pi(j)=S_3$ for all $j \in S_3$, then players in $S_1$ strictly prefer $S_1$ to their own coalitions.
\end{itemize}
In either case, $\pi$ is strongly blocked with probability at least $\frac{1}{3}$, a contradiction.
	\end{itemize}
\end{proof}

\section{Inferring Tree Interaction Networks from Data}\label{sec:infer-tree}
Until now, we assume that the underlying interaction network $G$ was given to us as input; this is, naturally, an assumption that we would like to forgo. Suppose the underlying graph is a forest $T=\tup{N,E}$, and consider the question of whether it is possible to infer a forest $T^* = \tup{N,E^*}$ that agrees with the original graph with high probability. Let $\calT_n$ be the set of all possible trees over $n$ vertices, and let $\calF_n$ be the set of all possible forests; $\calF_n$ is our hypothesis class for guessing an approximate forest. More formally, $\calF_n$ consists of functions $f_G$ that given an $n$ vertex forest $G$, output $1$ if a set of vertices is connected, and 0 otherwise.
By Cayley's formula:
\begin{align}
|\calT_n| = n^{n-2}
\end{align}

Any forest can be obtained by choosing a tree, and then choosing a subset of its edges, hence:

\begin{align}
|\calF_n| \le |\calT_n| 2^{n-1} = n^{n-2} 2^{n-1} \label{lem:num-of-forests}
\end{align}

We observe the following variant of Theorem~\ref{thm:pdim} for finite hypothesis classes. 
\begin{theorem}[\citet{anthony1999learning}]\label{thm:finite-learn}
	Let $\calC$ be a finite hypothesis class where $\log|\calC|$ is polynomial in $n$. If there exists a polynomial time algorithm that for any $v \in \calC$, and samples $$\tup{S_1,v(S_1)}\dots,\tup{S_m,v(S_m)}$$ 
	finds a function $v^*\in \calC$ consistent with the samples, i.e., $v^*(S_j)=v(S_j)$ for each $j=1,2,\ldots,m$, then $\calC$ is efficiently PAC learnable.
\end{theorem}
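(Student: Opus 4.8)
The plan is to invoke the classical Occam's-razor argument for finite hypothesis classes. Since the assumed algorithm always returns a hypothesis $v^* \in \calC$ that is consistent with the sample, it suffices to argue that, once the sample is large enough, \emph{every} consistent hypothesis is simultaneously $\epsilon$-approximately correct with high probability. The output of the learner is then automatically $\epsilon$-approximately correct, so I only need to (i) establish a sufficient sample size and (ii) check that this size, together with the assumed polynomial running time of the consistent learner, keeps the whole procedure efficient.

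First I would call a hypothesis $h \in \calC$ \emph{bad} if $\Pr_{S \sim \calD}[h(S) \ne v(S)] \ge \epsilon$, and bound the probability that a fixed bad $h$ is nonetheless consistent with all $m$ i.i.d. samples. Since each sample independently disagrees with $h$ with probability at least $\epsilon$, a fixed bad hypothesis survives as consistent with probability at most $(1-\epsilon)^m \le e^{-\epsilon m}$. Next I would take a union bound over the at most $|\calC|$ bad hypotheses, obtaining that the probability that \emph{some} bad hypothesis is consistent with the sample is at most $|\calC|\, e^{-\epsilon m}$. Forcing this quantity to be at most $\delta$ and solving for $m$ yields a sufficient sample size
$$
m = \left\lceil \frac{1}{\epsilon}\left(\ln|\calC| + \ln\tfrac{1}{\delta}\right)\right\rceil .
$$
With this many samples, with probability at least $1-\delta$ no bad hypothesis is consistent with the sample, so any consistent hypothesis---in particular the one returned by the algorithm---has error below $\epsilon$.

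Finally I would verify efficiency. The key observation is that the union-bound blow-up contributes only the additive term $\ln|\calC|$ to $m$; by assumption $\log|\calC|$ is polynomial in $n$, so $m$ is polynomial in $n$, $\frac{1}{\epsilon}$ and $\log\frac{1}{\delta}$. Since the consistent learner runs in time polynomial in its input, and the input now consists of $m$ (polynomially many) samples, the overall procedure runs in polynomial time while drawing polynomially many samples, establishing efficient PAC learnability. The only real subtlety---and the reason the hypothesis $\log|\calC| = \poly(n)$ is indispensable---is precisely this balance between the (possibly exponential) number $|\calC|$ of hypotheses eliminated by the union bound and the merely logarithmic factor it contributes to $m$; were $\log|\calC|$ superpolynomial, the same calculation would force a superpolynomial sample size and the argument would break down. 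This is exactly the regime in which the theorem will later be applied, since by Inequality~\eqref{lem:num-of-forests} the forest class $\calF_n$ satisfies $\log|\calF_n| = O(n\log n)$.
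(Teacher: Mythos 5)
The paper states this theorem as a known result from \citet{anthony1999learning} and gives no proof of its own; your argument is the standard Occam's-razor proof for finite hypothesis classes (bound the survival probability of a fixed bad hypothesis by $(1-\epsilon)^m$, union-bound over $|\calC|$, solve for $m$), and it is correct, including the observation that the hypothesis $\log|\calC|=\poly(n)$ is exactly what keeps the sample size polynomial. Nothing further is needed.
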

Since $\log|\calF_n| < 2n \log n$, all we need is to establish the existence of an efficient algorithm to compute a forest consistent with a given sample. More formally, let $T$ be an unknown forest; we are given a set of $m$ subsets of vertices labeled 'connected' or 'disconnected' according to $T$, can we find a forest that is consistent with the labeling? First, we consider an easier question and assume all subsets are connected. The answer to this question is affirmative, and appears in \citet{conitzer2004graphs}.

\begin{theorem}[\citet{conitzer2004graphs}]\label{thm:conitzergraphs}
	Let $T = \tup{N,E}$ be a tree. Given a list $S_1,\dots,S_m$ of connected vertices in $T$, there exists a poly-time algorithm that outputs a tree $T^*$ where every subset $S_j$ is connected in $T^*$. 
\end{theorem}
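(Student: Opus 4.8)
The plan is to give a constructive algorithm based on \emph{leaf elimination}: I will build $T^*$ one vertex at a time, repeatedly identifying a vertex that can safely be made a leaf and attaching it to an admissible neighbour. The guiding observation is that if $v$ is a leaf of a valid tree with neighbour $u$, then every connected set $S_j$ containing $v$ together with some other vertex must also contain $u$ — otherwise the path inside $S_j$ from $v$ to the rest of $S_j$ could not avoid leaving $S_j$. This motivates defining, for each vertex $v$, the candidate-neighbour set
\[
I_v \;=\; \bigcap \{\, S_j \mid v \in S_j,\ |S_j| \ge 2 \,\},
\]
with the convention $I_v = N$ when $v$ lies in no set of size at least two. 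I call $v$ a \emph{candidate leaf} if $I_v \neq \{v\}$; in that case I attach $v$ to an arbitrary $u \in I_v \setminus \{v\}$, delete $v$ from $N$ and from every $S_j$, and recurse on $N \setminus \{v\}$.

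Correctness rests on two claims proved by induction on $|N|$. The first is an \emph{extension} step: if $T'$ is any valid tree for the reduced instance on $N \setminus \{v\}$ and $u \in I_v$, then adding $v$ as a leaf adjacent to $u$ yields a valid tree for the original instance. This is a short case analysis per set: sets avoiding $v$ are unaffected by adding a leaf elsewhere, singletons are trivial, and for $S_j \ni v$ with $|S_j| \ge 2$ we have $u \in I_v \subseteq S_j$, so $S_j \setminus \{v\}$ is connected in $T'$ and contains $u$, whence attaching $v$ to $u$ keeps $S_j$ connected. The second is a \emph{reduction} step: whenever the current instance is solvable and $v$ is a candidate leaf, the reduced instance is still solvable. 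Combined with the fact that the leaves of any witness tree are candidate leaves (by the guiding observation), this guarantees that a candidate leaf always exists and that solvability is preserved throughout, so the algorithm never gets stuck; since it adds exactly one edge per eliminated vertex, it returns a spanning tree of $N$.

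The main obstacle is the reduction step, because the candidate $v$ I pick need not be a leaf of the witness tree $T$ — it may be internal, in which case $T \setminus v$ is a disconnected forest and $S_j \setminus \{v\}$ can break into several pieces. To handle this I would perform tree surgery on $T$: locate the component of $T \setminus v$ containing the chosen $u$ — this component hangs off some neighbour $w_0$ of $v$ — and reassemble a tree on $N \setminus \{v\}$ by joining $w_0$ to every other neighbour of $v$. The key point is that $u \in I_v$ forces $u$, and hence its nearest neighbour $w_0$ of $v$ (which lies on the $u$--$v$ path), to belong to every size-$\ge 2$ set through $v$; consequently the pieces into which each such $S_j \setminus \{v\}$ splits all meet $w_0$'s piece once the new edges are added, so the reassembled tree witnesses solvability of the reduced instance.

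Finally I would note the complexity and the promise: computing all $I_v$ and selecting a candidate leaf costs $O(mn)$ per iteration over $O(n)$ iterations, so the procedure runs in polynomial time, and since a witness $T$ is guaranteed to exist by hypothesis, it always terminates with a valid $T^*$.
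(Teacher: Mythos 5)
Your argument is sound, and it is worth noting up front that the paper does not actually prove this statement: Theorem~\ref{thm:conitzergraphs} is imported from \citet{conitzer2004graphs}, so the only meaningful comparison is with that reference. Their construction is quite different from yours: they weight each candidate edge $\{u,w\}$ by the number of sample sets containing both $u$ and $w$ and compute a maximum-weight spanning tree; the key counting observation is that any spanning tree has at most $|S_j|-1$ edges inside each $S_j$, with equality exactly when $S_j$ is connected, so the total weight is at most $\sum_j(|S_j|-1)$ and a tree attains this bound if and only if it is consistent --- hence, whenever a consistent tree exists, \emph{every} maximum spanning tree is consistent, and the whole algorithm is one MST computation plus a validity check. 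Your leaf-elimination proof reaches the same conclusion by an incremental route: the intersection sets $I_v$ correctly certify peelable vertices (every leaf of a witness tree is a candidate leaf), your extension lemma is airtight, and the tree surgery in the reduction step --- rerouting all components of $T\setminus v$ through the neighbour $w_0$ on the $v$--$u$ path, which must lie in every size-$\ge 2$ sample through $v$ because such a sample, being connected in $T$, contains the entire $v$--$u$ path --- is precisely the point where a naive version of the argument would fail, and you handle it correctly. The trade-off is that the MST proof is shorter and gives the feasibility dichotomy in a single global step, while yours is longer but elementary, avoids the weighting trick, and doubles as a decision procedure: by your extension lemma the output is valid whenever the algorithm terminates, and it can only get stuck if no consistent tree exists.
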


Theorem~\ref{thm:conitzergraphs} pertains to trees, but immediately generalizes to forests by noting that if $T$ is a forest, any tree whose edgeset is a superset of $E$ is a valid solution as well, hence the same algorithm solves the problem.

In other words, if one only observes subsets of feasible coalitions and players' preferences over them, it is possible to find a forest structure consistent with the samples.

\begin{corollary}\label{cor:forests-learnable}
	If the probability distribution $\calD$ supports only connected subgraphs, $\calF_n$ is efficiently PAC learnable over $\calD$.
\end{corollary}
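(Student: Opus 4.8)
The plan is to invoke the finite-class learning criterion of Theorem~\ref{thm:finite-learn}. We have already observed that $\log|\calF_n| < 2n\log n$, which is polynomial in $n$, so the only missing ingredient is a polynomial-time algorithm that, given a labeled sample $\tup{S_1,f_T(S_1)},\dots,\tup{S_m,f_T(S_m)}$ generated by the unknown target forest $T$, produces a hypothesis $f_{T^*}\in\calF_n$ consistent with the sample, i.e. $f_{T^*}(S_j)=f_T(S_j)$ for every $j$. Once such a consistent learner is exhibited, efficient PAC learnability over $\calD$ is immediate.

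The key observation is that the restriction on the support of $\calD$ collapses the label set. Since $\calD$ supports only subsets that are connected in $T$, every sampled set $S_j$ satisfies $f_T(S_j)=1$; in other words, the learner never observes a `disconnected' label. Consistency therefore reduces to the purely combinatorial problem of producing a forest $T^*$ in which all of the observed subsets $S_1,\dots,S_m$ are simultaneously connected. This is exactly the problem solved by Theorem~\ref{thm:conitzergraphs}: the algorithm of \citet{conitzer2004graphs} outputs in polynomial time a tree $T^*$ in which every $S_j$ is connected, and since a tree is itself a forest we have $f_{T^*}\in\calF_n$ with $f_{T^*}(S_j)=1=f_T(S_j)$ for all $j$. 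Combining the polynomial bound on $\log|\calF_n|$ with this polynomial-time consistent learner and applying Theorem~\ref{thm:finite-learn} yields the claim.

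I do not anticipate a genuine technical obstacle, since the heavy combinatorial lifting is imported wholesale from Theorem~\ref{thm:conitzergraphs}. The one step that deserves care is the reduction itself: one must argue explicitly that the hypothesis on $\calF_n$ is specified solely by which subsets are deemed connected, and that restricting $\calD$ to connected subgraphs forces every sample label to be $1$, so that Conitzer et al.'s all-connected algorithm suffices and the learner is never required to realize a disconnected label. This is precisely the favorable regime; as discussed in the preceding text, once disconnected subsets must also be matched the consistency problem becomes intractable, which is why the restriction on $\calD$ is doing the essential work here.
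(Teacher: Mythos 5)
Your proposal is correct and follows exactly the paper's argument: the paper proves the corollary by citing the bound $\log|\calF_n| < 2n\log n$, Theorem~\ref{thm:finite-learn}, and Theorem~\ref{thm:conitzergraphs}, which is precisely the combination you assemble (including the observation that the all-connected labeling makes Conitzer et al.'s algorithm a consistent learner and that its tree output is itself a member of $\calF_n$). No gaps; nothing further needed.
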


Corollary~\ref{cor:forests-learnable} is immediately implied by \eqref{lem:num-of-forests}, Theorems \ref{thm:finite-learn} and \ref{thm:conitzergraphs}.
Theorem~\ref{thm:hgtrees-stabilizable} assumes that the underlying interaction graph is known to us. Leveraging Corollary~\ref{cor:forests-learnable}, we now show that this assumption can be forgone; that is, it is possible to PAC stabilize a hedonic game whose underlying interaction graph is a forest, even if the forest structure is unknown to us. Note that we established that the forest structure can be PAC learned efficiently only if the sample contains exclusively connected coalitions, yet we do not have this requirement for PAC stabilizability.

\begin{theorem}\label{thm:hg-stabilizable}
	Let $\calH^* = \bigcup\{\, \calH_G \mid \mbox{$G$ is a forest} \,\}$ be the class of all hedonic games whose interaction graph is a forest; then $\calH^*$ is efficiently PAC stabilizable.
\end{theorem}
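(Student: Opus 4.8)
The plan is to reduce to the known-tree case (Theorem~\ref{thm:hgtrees-stabilizable}) by first learning a candidate forest $T^*$ from the sample and then running Algorithm~\ref{alg:tree} on $T^*$. I would draw a sample $\calS$ from $\calD$, extract the sub-collection of sampled coalitions whose members all report nonnegative value, feed those to the consistency algorithm of Theorem~\ref{thm:conitzergraphs} to obtain $T^*$, and output the partition produced by Algorithm~\ref{alg:tree} run on $T^*$ (with $T^*$-connectivity as the feasibility notion).

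The crucial structural observation is that the only blocking coalitions I need to defend against are connected in the true forest $G$. Indeed, Algorithm~\ref{alg:tree} guarantees $v_i(\pi(i)) \ge v_i(B_i) \ge 0$ for every player, since every $B_i$ competes against the singleton $\{i\}$, whose value is $0$. Hence if $X$ strongly blocks the output $\pi$, then $v_i(X) > v_i(\pi(i)) \ge 0$ for all $i \in X$, so $X$ lies in the set $A = \{\, S \mid v_i(S) > 0 \text{ for all } i \in S \,\}$. By Definition~\ref{def:HedonicG} every disconnected coalition gives all its members negative value, so every coalition in $A$ is connected in $G$, and membership in $A$ is read off directly from the utility samples.

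This reduces the task to two one-sided requirements, each met with polynomially many samples. First, I would show that Algorithm~\ref{alg:tree} run on $T^*$ makes blocking coalitions that are connected in $T^*$ improbable: the argument of Theorem~\ref{thm:hgtrees-stabilizable} applies to $T^*$, and the filter $v_j(S) \ge v_j(B_j) \ge 0$ in line~\ref{step:guarantee} ensures the algorithm never selects a coalition that is connected in $T^*$ but disconnected in $G$ (such a coalition has every member valuing it negatively). Second, I would bound $\Pr_{S \sim \calD}[\, S \in A \text{ and } S \text{ is not connected in } T^*\,]$. Since all coalitions in $\calS \cap A$ are connected in $G$, the algorithm of Theorem~\ref{thm:conitzergraphs} returns a forest $T^*$ in which all of them are connected; applying the finite-class counting bound behind Theorem~\ref{thm:finite-learn} to $\calF_n$ (with $\log|\calF_n| < 2n\log n$), any forest that errs on an $\epsilon/2$-fraction of $A$ is eliminated by some sampled coalition with high probability, so the returned $T^*$ meets this bound.

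Combining, for the output $\pi$ I would split $\Pr[X \text{ strongly blocks } \pi]$ according to whether $X$ is connected in $T^*$: the first part is at most $\epsilon/2$ by the first requirement, and the second part is at most $\Pr[S \in A \text{ and } S \text{ not connected in } T^*] \le \epsilon/2$ since every blocking $X$ lies in $A$. A union bound over the two high-probability events yields $\epsilon$-PAC stability with confidence $1-\delta$. The main obstacle is the second requirement: we provably cannot learn $G$ exactly once disconnected coalitions appear in the sample (this is the hardness established in Section~\ref{sec:infer-tree}), so the argument must rely on the asymmetry that we only ever need ``connected in $G$ implies connected in $T^*$'' on the region $A$, while the reverse errors are absorbed harmlessly by the utility filter of Algorithm~\ref{alg:tree}.
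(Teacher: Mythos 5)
Your proposal is correct and follows essentially the same route as the paper: learn an approximate forest $T^*$ from the sampled coalitions that are (observably) connected in $G$ via Theorem~\ref{thm:conitzergraphs} and the finite-class bound, run Algorithm~\ref{alg:tree} on $T^*$, and split the blocking probability into the $T^*$-misclassified part and the correctly classified part, each at most $\eps/2$ --- the paper packages the same decomposition via the auxiliary distributions $\calD'$ and $\calD''$. Your explicit observations that blocking coalitions must lie in the positive-utility set $A$ (hence are $G$-connected) and that the utility filter in line~\ref{step:guarantee} prevents the algorithm from ever outputting a $T^*$-connected but $G$-disconnected coalition are points the paper leaves implicit, and they are handled correctly.
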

\begin{proof}
	Suppose $\eps, \delta$ are given, and there is an unknown forest $G$, hedonic game $\tup{N,\calV} \in \calH_G$ and a probability distribution $\cal D$ over coalitions. Let $\cal D'$ be a distribution obtained from $\cal D$ by substituting any disconnected coalitions with $\emptyset$. $\cal D'$ supports only connected coalitions, so by Corollary \ref{cor:forests-learnable}, $G$ can be efficiently PAC learned with respect to $\cal D'$ to obtain $G'$ s.t. $\Pr_{S \sim \cal D'}[f_{G'}(S) \ne f_G(S)] < \frac{\epsilon}{2}$ with confidence $1-\frac{\delta}{2}$. Let $\cal D''$ be a distribution obtained from $\cal D'$ by substituting any coalitions $S$ s.t. $f_{G'}(S) \ne f_G(S)$ with $\emptyset$. Since $f_{G'}(S) = f_G(S)$ for any $S$ supported by $\cal D''$, by Theorem \ref{thm:hgtrees-stabilizable}, given $G'$, $\tup{N,\calV}$ can be PAC stabilized with respect to $\cal D''$ to obtain a partitioning of the agents $\pi$ such that $\Pr_{S \sim \cal D''}[S \text{ strongly blocks } \pi] < \frac{\epsilon}{2}$ with confidence $1-\frac{\delta}{2}$. For ease of notation, we write $\blocks(S,\pi)$ whenever $S$ strongly blocks $\pi$. 
	\begin{align}
	\Pr_{S \sim \cal D}[\blocks(S,\pi)] \notag  =&  \Pr_{S \sim \cal D}[\blocks(S,\pi)\land \mbox{$S$ is connected in $G$}]\\
	=& \Pr_{S \sim \cal D'}[\blocks(S,\pi)] \notag\\
	=& \Pr_{S \sim \cal D'}[\blocks(S,\pi) \land f_{G'}(S)= f_G(S)] \notag\\
	&+ \Pr_{S \sim \cal D'}[\blocks(S,\pi)\land f_{G'}(S) \ne f_G(S)] \notag\\
	\leq& \Pr_{S \sim \cal D'}[\blocks(S,\pi) \land f_{G'}(S)= f_G(S)]+ \frac{\eps}{2} \label{eq:SblocksFDiff}\\
	=& \Pr_{S \sim \cal D''}[\blocks(S,\pi)] + \frac{\eps}{2} \leq  \frac{\eps}{2} + \frac{\eps}{2} = \eps\label{eq:SblocksD''}
	\end{align}
	By construction of $G'$ and $\pi$, lines \eqref{eq:SblocksFDiff} and \eqref{eq:SblocksD''} hold with confidence $1- \frac{\delta}{2}$ each. We conclude that $\Pr_{S \sim \cal D}[S \text{ strongly blocks } \pi] \leq \eps$
	with confidence at least $1 - \delta$. Since the constructions of $G'$ and $\pi$ both require a polynomial number of samples from $\cal D$, $\calH^*$ is efficiently PAC stabilizable.
\end{proof}

Theorem~\ref{thm:conitzergraphs}, while interesting in its own right, provides us with only a partial understanding of the problem: if all one is given is positive examples, it is possible to find a tree structure that is consistent with all connected coalitions. In what follows, we study a more general question of whether we can find a forest consistent with both positive (connected coalitions) and negative (disconnected coalitions) examples. As we show in Theorem~\ref{thm:tree:hardness:consistency}, introducing the possibility of negative examples makes the problem computationally intractable, even if we restrict ourselves to the hypothesis class of paths. Hence, forests cannot be PAC learned efficiently. It is interesting to note that Theorem~\ref{thm:hg-stabilizable} could be achieved despite this negative result.

\subsection{The Complexity of Constructing Consistent Trees}\label{sec:consistentpath}
We now argue that deciding whether there exists a forest consistent with both positive and negative examples is computationally intractable; in fact, this claim holds even when the desired forest is a path. This result stands in sharp contrast to known computational results in the literature; indeed, there are several efficient algorithms for such restricted networks when only connected coalitions are taken into account \footnote{The problem of deciding the existence of a path consistent with connected coalitions is equivalent to the problem of determining whether the intersection graph of a hypergraph is an interval, which is also closely related to testing the consecutive ones property of a matrix (see, e.g. the survey by \citet{dom2009consecutive} for more details).} \cite{booth1976,korte1987,Corneil1998,fulkerson1965,Habib2000,Kratsch2006,Hsu1999}. 

Specifically, we are given $m$ samples of node subsets $S_1,\dots,S_m$; each subset $S_j$ is labeled by a function $\ell_G$ such that 
\begin{align}
\ell_G(S_j) = \begin{cases}1 & \mbox{if }S_j \mbox{ is connected in }G\\ 0 & \mbox{otherwise.}\end{cases}\label{eq:tree-connected}
\end{align}
We say that a graph $G^* = \tup{V,E^*}$ is {\em consistent with} $G=\tup{V,E}$ over the samples $\calS = \{S_1,\ldots,S_m\}\subseteq V$ if and only if $\ell_{G^*}(S_j) = \ell_{G}(S_j)$ for all $j \in [m]$.
Our objective is to find a forest $T^*$ such that $\ell_{T^*}(S_j) = \ell_G(S_j)$ for all $j$. Theorem~\ref{thm:tree:hardness:consistency} states that it is NP-hard to determine whether such a graph exists. 

\begin{theorem}\label{thm:tree:hardness:consistency}
	Given a family of subsets $\calS \subseteq 2^N$ such that each set in $\calS$ is of size at most $3$, and a mapping $\ell:\calS \rightarrow \{1,0\}$, it is NP-hard to decide whether there exists a path $T^*=\tup{N,E}$ such that $\ell_{T^*}(S) = \ell(S)$ for each $S \in \calS$. The result also holds when $T^*$ is a forest.
\end{theorem}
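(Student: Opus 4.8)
The plan is to reduce from an NP-complete problem to the consistency question, after first translating the connectivity labels into geometric constraints. A path on vertex set $N$ is precisely a linear ordering of $N$: a size-$2$ set $\{a,b\}$ is connected iff $a,b$ are consecutive in the order, and a size-$3$ set $\{a,b,c\}$ is connected iff its three elements occupy three consecutive positions. For a forest the picture is looser: $\{a,b\}$ is connected iff it is an edge, and $\{a,b,c\}$ is connected iff the three vertices induce a $P_3$ (a triangle is impossible in a forest). So the problem is an ordering/embedding problem with positive constraints (a set must be contiguous) and negative constraints (a set must not be), and the point is to show that mixing the two already makes even the path case intractable, in contrast to the positive-only case handled by Theorem~\ref{thm:conitzergraphs}.

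For the path case I would reduce from Hamiltonian Path. Given a graph $H = \tup{V,F}$, set $N = V$ and, for every non-edge $\{u,v\} \notin F$, add the sample $\{u,v\}$ with $\ell(\{u,v\}) = 0$; every sample has size $2 \le 3$. Any path $T^* = \tup{N,E^*}$ is a spanning linear order of $N$, and the labels assert that no non-edge of $H$ is a consecutive pair of $T^*$; equivalently, every consecutive pair of $T^*$ lies in $F$, i.e.\ $T^*$ is a Hamiltonian path of $H$. Conversely, any Hamiltonian path of $H$, read as a linear order, satisfies every label. Hence a consistent path exists iff $H$ is traceable, which is NP-hard.

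The above construction does not settle the forest case, and this is where I expect the real difficulty to lie. The empty forest (all isolated vertices) vacuously satisfies every negative label, so negative constraints alone force nothing, and since all samples have size at most three one cannot directly assert that a large set is connected. To rule out degenerate solutions I would introduce positive constraints that build rigid segments: chaining $\ell(\{s_i,s_{i+1}\}) = 1$ with $\ell(\{s_i,s_j\}) = 0$ for $|i-j|\ge 2$ pins a block into a fixed $P_3$-chain under both the path and the forest reading, while negative size-$3$ labels, together with the impossibility of triangles in a forest, let me forbid branching. To make satisfiability correspond to the existence of a consistent forest — and, as the theorem asserts, of a consistent path — I would route the reduction through \TBSAT, letting the layout of each variable block encode a truth value and enforcing each clause by forbidding the single local pattern in which all three of its literals are false. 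The main obstacle is precisely this gadget design: certifying that the intended layouts are exactly the satisfying assignments and that no unintended path or forest slips through, with the additional wrinkle that for the path version one must force a single spanning path even though connectivity of large sets is not directly expressible with size-$\le 3$ samples. Verifying these gadgets is where the bulk of the technical work would go.
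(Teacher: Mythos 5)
Your path-case reduction is correct, complete, and genuinely different from the paper's: by putting in only the non-edges of a graph $H$ as negative size-$2$ samples, you force every edge of the consistent spanning path to lie in $H$, so a consistent path exists iff $H$ has a Hamiltonian path. This is considerably more elementary than the paper's argument, which instead reduces \TBSAT to the path-consistency problem via variable gadgets (forcing a binary choice between the two positive and the two negative literal occurrences), clause gadgets, garbage collectors, and a mix of positive pairs and negative pairs and triples. What the paper's heavier machinery buys is exactly the thing your reduction cannot deliver: a single construction whose correctness survives the passage from paths to forests.

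The forest case is where your proposal has a genuine gap, and you have diagnosed the difficulty accurately without resolving it. Your Hamiltonian-path instance is vacuously satisfied by the edgeless forest, so negative constraints alone force nothing, and positive size-$\le 3$ constraints only pin down local chain structure; you then defer the entire gadget design (``verifying these gadgets is where the bulk of the technical work would go''), which is precisely the content of the claim for forests. The paper resolves this by first carrying out the full \TBSAT reduction for paths and then adding constraints that force the consistent forest to have exactly two leaves $s$ and $t$, so that any consistent forest must in fact be a path --- although it does so by labelling the $(n-1)$-player coalitions $N \setminus \{s\}$ and $N \setminus \{t\}$ as connected and all other $(n-1)$-player coalitions as disconnected, which steps outside the size-$3$ restriction stated in the theorem, so the delicacy you flag is real even in the paper's own treatment. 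As it stands, your proposal establishes only the path half of the theorem; to claim the forest half you would need to actually build and verify the rigid-segment and anti-branching gadgets you sketch, or find some other device for excluding degenerate forests.
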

\begin{proof}
	We will first show that it is NP-hard to decide whether there exists a path consistent with both positive and negative samples; later we will show that how the reduction can be extended to forests. 
	
	Our reduction is from a restricted version of \SAT. Specifically, we consider \TBSAT. Recall that in this version of \SAT, each clause contains at most $3$ literals, and each variable occurs exactly twice positively and twice negatively; this problem is known to be NP-complete \cite{Berman2003}. 
	
	\noindent{\em Idea}: consider a formula $\phi$ with a variable set $X=\{x_1,x_2,\ldots,x_n\}$ and clause set $C=\{c_1,c_2,\ldots,c_m\}$, 
	where for each variable $x_i \in X$ we write $x_i(1)$ and $x_i(2)$ for the two positive occurrences of $x_i$, 
	and $\bar x_i(1)$ and $\bar x_i(2)$ for the two negative occurrences of $x$. We will have one {\em clause gadget} $C_j=\{c_{j}(1),c_{j}(2)\}$ for each clause $c_j \in C$ and one variable gadget $V_i=\{v_i(1),v_i(2)\}$ for each variable $x_i \in X$. Most player arrangements will be inconsistent with the pair $\tup{\calS,\ell}$ unless the following holds:
	\begin{itemize}
		\item For each clause $c_j \in C$, a literal player contained in a clause $c_j$ connects the players from a clause gadget.
		\item For each variable $x_i \in X$, either the pair of positive literal players $x_i(1), x_i(2)$ or the pair of negative literal players ${\bar x_i(1)},{\bar x_i(2)}$ connects the players from a variable gadget.
	\end{itemize}
	
	Hence, one can think of the variable gadgets as forcing a path to make a choice between setting $x_i$ {\em true} and setting $x_i$ {\em false}; each clause gadget ensures that the resulting assignment is satisfiable. 
	
	\noindent{\em Construction details}: 
	For each variable $x_i \in X$, we introduce two {\em variable players} $v_i(1)$ and $v_i(2)$, and four literal players 
	$$x_i(1), x_i(2), {\bar x_i(1)},{\bar x_i(2)},$$ which correspond to the four occurrences of $x$. 
	For each clause $c_j\in C$, we introduce two {\em clause players} $c_{j}(1)$ and $c_{j}(2)$. Let $k:= (4n-m-2n)+1=2n-m+1$. We introduce $k$ {\em garbage collectors} $g_1,g_2,\ldots,g_k$, and two {\em leaf} players $s$ and $t$. Intuitively, garbage collectors will be used to connect the literal players that do not appear in any clause or variable gadget. 
	
	Our set $\calS$ of samples consists of three subfamilies $\cal C$, $\calU_2$, and $\calU_3$: the sets in $\cal C$ correspond to the connectivity constraints, the sets in $\calU_2$ correspond to disconnected coalitions of size $2$, and the sets in $\calU_3$ correspond to disconnected coalitions of size $3$. 
	
	First, we construct the set $\calC$ that constitutes of 
	\begin{itemize}
		\item the four pairs $\{s,c_{1}(1)\}$, $\{c_{m}(2),v_{1}(1)\}$, $\{v_{n}(2),g_1\}$, $\{g_k,t\}$;
		\item the consecutive pairs $\{c_{j}(2),c_{j+1}(1)\}$ for $j \in [m-1]$; and 
		\item the consecutive pairs $\{v_i(2),v_{i+1}(1)\}$ for $i \in [n-1]$.
	\end{itemize}
	
	We next construct the negative samples $\calU_2$ and $\calU_3$ as follows. The family $\calU_2$ is the set of all player pairs, except for the following:
	\begin{itemize}
		\item the pairs in $\calC$.
		\item the pairs of a variable player and its corresponding literal player, i.e., the pairs of the form $\{v_{i}(h),x_{i}(h)\}$ or $\{v_{i}(h),{\bar x_{i}(h)}\}$.
		\item the pairs of a clause player and a literal player contained in it, i.e., the pairs of the form $\{c_{j}(h),y\}$ where $y$ is a literal player in a clause $c_j$.
		\item the pairs of positive literal players or negative literal players of each variable, i.e., the pairs of the form $\{x_i(1),x_i(2)\}$ or $\{{\bar x_i(1)},{\bar x_i(2)}\}$.
		\item the pairs of a literal player and a garbage collector, i.e., the pairs of the form $\{x_{i}(h),g_{k'}\}$ or $\{{\bar x_{i}(h)},g_{k'}\}$.
	\end{itemize}
	
	In a path consistent with the samples, each variable player can share an edge with its literal player; and each clause player can share an edge with a literal player contained in it. 
	
	The family $\calU_3$ consists of the following player triples:
	\begin{itemize}
		\item triples of the form $\{v_i(1),x_i(1),y\}$ where $y \neq v_{i-1}(2)$ and $y \neq x_i(2)$, and the triples of the form $\{v_i(1),{\bar x_i(1)},y\}$ where $y \neq v_{i-1}(2)$ and $y \neq {\bar x_i(2)}$; and
		\item the triples of the form $\{x_i(1), x_i(2),y\}$ where $y \neq v_i(1)$ and $y \neq v_i(2)$, and the triples of the form $\{{\bar x_i(1)},{\bar x_i(2)},y\}$ where $y \neq v_i(1)$ and $y \neq v_i(2)$.
	\end{itemize}
	Here $v_{0}(2)=c_{m}(2)$ and $c_{0}(2)=s$. 	
	The above constraints mean that if a variable player $v_i(1)$ and its positive literal player $x_i(1)$ (respectively, its negative literal player ${\bar x_i(1)}$) are adjacent, then the player $x_i(1)$ can be only adjacent to the other positive literal player $x_i(2)$ (respectively, the other negative literal player ${\bar x_i(2)}$), which can then be only adjacent to the other variable player $v_i(2)$. 
	
	Finally, for each $S \in \calS$ we set $\ell(S)=1$ if and only if $S \in \calC$.
	Note that the number of players in the instance is bounded by $O(n+m)$ and the number of sets in $\calS$ is bounded by $O(n^2+m^2)$.
	
	\noindent{\em Correctness}:
	We will now show that $\phi$ is satisfiable if and only if there exists a path consistent with $\tup{\calS,\ell}$. 
	
	\noindent
	$\Longrightarrow$: Suppose that there exists a truth assignment $f: X \rightarrow \{\mbox{\em true},\mbox{\em false}\}$ that satisfies~$\phi$. First, since $f$ is a satisfiable assignment for $\phi$, for each clause gadget $C_j=\{c_{j}(1),c_{j}(2)\}$, we can select exactly one literal that satisfies a clause $c_j$; we connect the literal player with each of the clause players $c_{j}(1)$ and $c_{j}(2)$ by an edge. We combine all the clause gadgets by constructing an edge $\{c_{j}(2),c_{j+1}(1)\}$ for each $j \in [m-1]$. Now, we consider an assignment that gives the opposite values to $f$, and connect each variable gadget using the literals corresponding to this assignment. Specifically, for each variable gadget $V_i=\{v_i(1),v_i(2)\}$, if $x_i$ is set to {\em false}, we select its positive literal players and construct a path that consists of three edges $\{v_i(1),x_i(1)\}$, $\{x_i(1),x_i(2)\}$, and $\{x_i(2),v_i(2)\}$; similarly, if $x_i$ that is set to {\em true}, we select its negative literal players and construct a path that consists of three edges $\{v_i(1),{\bar x_i(1)}\}$, $\{{\bar x_i(1)},{\bar x_i(2)}\}$, and $\{{\bar x_i(2)},v_i(2)\}$. We then create an edge $\{v_i(2),v_{i+1}(1)\}$ for each $i \in [n]$, and merge the variable gadgets all together. 
	
	Finally, we construct a path over the rest of players, by aligning the garbage collectors $g_1,g_2,\ldots,g_k$ in increasing order of their index, and putting one of the remaining $k-1(=4n-m-2n)$ literal players into each consecutive pair of garbage collectors arbitrarily. 
	We then merge all the paths by creating the four edges $\{s,c_{1}(1)\}$, $\{c_{m}(2),v_{1}(1)\}$, $\{v_{n}(2),g_1\}$, and $\{g_k,t\}$;
	see Figure \ref{fig1} for an illustration. It is easy to verify that the resulting graph is a path consistent with the samples. 
	
	\noindent
	$\Longleftarrow$: Conversely, suppose that there is a path $T^*=\tup{N,E}$ consistent with $\tup{\calS,\ell}$, i.e., for each $S \in \calS$, $S$ is connected in $T^*$ if and only if $S \in \calC$. Since every pair in $\calC$ should be connected, the four pairs $\{s,c_{1}(1)\}$, $\{c_{m}(2),v_{1}(1)\}$,$\{v_{n}(2),g_1\}$, $\{g_k,t\}$ must form an edge in $T^*$. Similarly, we have $\{c_{j}(2),c_{j+1}(1)\} \in E$ for each $j \in [m-1]$; also, $\{v_i(2),v_{i+1}(1)\} \in E$ for each $i \in [n-1]$. Observe that both players $s$ and $t$ must be the leaves of the constructed path since these players are only allowed to have one neighbor; thus, every other player has degree $2$. Combining these observations, the definition of $\calU_3$ ensures that our path specifies a truth assignment for $X$. 
	
	\begin{lemma}
		For each $i \in [n]$, we have either  
		\begin{itemize}
			\item $\{v_i(1),x_i(1)\}, \{x_i(1),x_i(2)\}, \{x_i(2), v_i(2)\} \in E$; or
			\item $\{v_i(1),{\bar x_i(1)}\}, \{{\bar x_i(1)},{\bar x_i(2)}\}, \{{\bar x_i(2)}, v_i(2)\} \in E$. 
		\end{itemize}
	\end{lemma}
	\begin{proof}
		Take any $i \in [n]$. Since each variable player $v_i(1)$ is adjacent to $v_{i-1}(2)$, the other players who can be adjacent to $v_i(1)$ are its literal players $x_i(1)$ and ${\bar x_i(1)}$ due to the constrains in $\calU_2$. First, if players $v_i(1)$ and $x_i(1)$ are adjacent, the player $x_i(1)$ can be only adjacent to $x_i(2)$ since $v_{i-1}(2)$ is already adjacent to $v_i(1)$, which then implies that $x_i(2)$ can be only adjacent to $v_i(2)$ due to the constrains in $\calU_3$. Similarly, if $v_i(1)$ and ${\bar x_i(1)}$ are adjacent, ${\bar x_i(1)}$ can be only adjacent to ${\bar x_i(2)}$, which then can be only adjacent to $v_i(2)$. This completes the proof. 

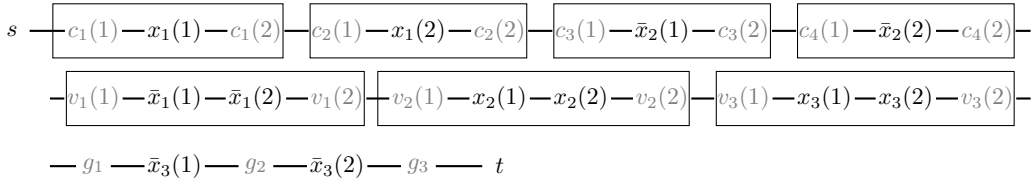
\begin{figure*}[tb]
		\centering
		\begin{tikzpicture}[scale=0.9, transform shape, every node/.style={minimum size=5mm, inner sep=1pt}]
		\node(0) at (0,0) {$s$};
		
		\draw (0.6,0.4) rectangle (4,-0.4);
		\node[gray](1) at (1.2,0) {$c_{1}(1)$};
		\node(2) at (2.4,0) {$x_1(1)$};
		\node[gray](3) at (3.6,0) {$c_1(2)$};
		
		\draw (4.4,0.4) rectangle (7.6,-0.4);
		\node[gray](4) at (4.8,0) {$c_{2}(1)$};
		\node(5) at (6,0) {$x_1(2)$};
		\node[gray](6) at (7.2,0) {$c_{2}(2)$};
		
		\draw (8,0.4) rectangle (11.2,-0.4);
		\node[gray](7) at (8.4,0) {$c_{3}(1)$};
		\node(8) at (9.6,0) {${\bar x_2(1)}$};
		\node[gray](9) at (10.8,0) {$c_{3}(2)$};
		
		\draw (11.6,0.4) rectangle (14.8,-0.4);
		\node[gray](10) at (12,0) {$c_{4}(1)$};
		\node(11) at (13.2,0) {${\bar x_2(2)}$};
		\node[gray](12) at (14.4,0) {$c_{4}(2)$};
		\node(13) at (15.3,0) {};
		\draw[-, >=latex,thick] (0)--(1) (2)--(1) (2)--(3) (4)--(3) (4)--(5) (6)--(5) (6)--(7) (8)--(7) (8)--(9) (9)--(10) (10)--(11) (11)--(12) (12)--(13);
		
		\node(14) at (0.3,-1) {};

		\draw (0.8,-0.6) rectangle (5.2,-1.4);
		\node[gray](15) at (1.2,-1) {$v_{1}(1)$};
		\node(16) at (2.4,-1) {${\bar x_1(1)}$};
		\node(17) at (3.6,-1) {${\bar x_1(2)}$};
		\node[gray](18) at (4.8,-1) {$v_{1}(2)$};
		
		\draw (5.4,-0.6) rectangle (10,-1.4);
		\node[gray](19) at (6,-1) {$v_{2}(1)$};
		\node(20) at (7.2,-1) {$x_2(1)$};
		\node(21) at (8.4,-1) {$x_2(2)$};
		\node[gray](22) at (9.6,-1) {$v_{2}(2)$};
		
		\draw (10.4,-0.6) rectangle (14.8,-1.4);
		\node[gray](23) at (10.8,-1) {$v_{3}(1)$};
		\node(24) at (12,-1) {$x_3(1)$};
		\node(25) at (13.2,-1) {$x_3(2)$};
		\node[gray](26) at (14.4,-1) {$v_{3}(2)$};
		
		\node(27) at (15.3,-1) {};
		
		\draw[-, >=latex,thick]  (14)--(15) (15)--(16) (16)--(17) (17)--(18) (18)--(19) (19)--(20) (20)--(21) (21)--(22) (23)--(22) (23)--(24) (24)--(25) (25)--(26) (26)--(27);
		
		\node(28) at (0.3,-2) {};
		\node[gray](29) at (1.2,-2) {$g_1$};
		\node(30) at (2.4,-2) {${\bar x_3(1)}$};
		\node[gray](31) at (3.6,-2) {$g_2$};
		\node(32) at (4.8,-2) {${\bar x_3(2)}$};
		\node[gray](33) at (6,-2) {$g_{3}$};
		\node(34) at (7.2,-2) {$t$};
		
		\draw[-, >=latex,thick] (28)--(29) (29)--(30) (30)--(31) (31)--(32) (32)--(33) (33)--(34) ;
		\end{tikzpicture}
		\caption{Graph constructed for the formula $\phi=(x_1 \lor x_2 \lor x_3) \land (x_1 \lor x_2 \lor {\bar x_3}) \land ({\bar x_1} \lor {\bar x_2} \lor x_3) \land ({\bar x_3} \lor {\bar x_2} \lor {\bar x_3})$ in the proof of Theorem \ref{thm:tree:hardness:consistency}. The formula is satisfied by the mapping $f$ that assigns the opposite value to the literals connected to each variable gadget $V_i=\{v_i(1),v_i(2)\}$, i.e., $f(x_1)=\mbox{\em true}$, $f(x_2)=\mbox{\em false}$, and $f(x_3)=\mbox{\em false}$. 
			\label{fig1}
		}
	\end{figure*}
	\end{proof}
Now take the truth assignment $f$ that sets the variable $x_i$ to {\em true} if and only if its negative literal players ${\bar x_i(1)}$ and ${\bar x_i(2)}$ are adjacent to variable players $v_i(1)$ and $v_i(2)$. This assignment can be easily seen to satisfy $\phi$. Indeed, for each clause $c_j \in C$, the clause player $c_{j}(2)$ must be adjacent to a literal player in $c_j$, since each clause player $c_{j}(2)$ is adjacent to $c_{j-1}(2)$ and the only other players who can be adjacent to $c_{j}(2)$ are their literal players contained in it; such a literal player corresponds to an occurrence appearing in the assignment $f$ and satisfies a clause $c_j$.
	
	To extend the above reduction to forests, given an instance of  \TBSAT, we create the same player set $N$ and family $\calS$ as above together with additional constraints; specifically, we indicate the two leaves $s$ and $t$ by forcing $(n-1)$-player coalitions $N \setminus \{s\}$ and $N \setminus \{t\}$ to be connected, and all other $(n-1)$-player coalitions to be disconnected. Thus, if there is a forest consistent with the samples, then it cannot have more than two leaves, namely, the graph must be a path.
\end{proof}

To conclude, if one allows observations of both connected and disconnected components, finding a forest consistent with samples is computationally intractable. We note that this does not preclude the existence of efficient heuristics computing consistent forest structures in practice: as previously mentioned, inferring PAC approximations of forest structures has a low communication complexity (Theorem~\ref{thm:finite-learn}); thus, given access to strong MILP solvers, we believe that identifying consistent forest structures should be easy in practice.

\section{Conclusions and Future Work}
This work establishes a strong connection between interaction structure and the ability to guarantee approximate stability in hedonic games; simply put, we show that if one only knows the underlying interaction structure and nothing more, then one can only obtain PAC stable outcomes if the underlying interaction structure is very well-behaved, i.e. a forest. This result seems to imply a natural tradeoff: our work assumes very little knowledge about underlying player preferences, and thus requires a lot of structure; \citet{sliwinski2017hedonic} make no assumptions on the underlying interaction network, but assume a more restricted player preference model. It would be interesting to explore `intermediate' cases; that is, suppose we make some structural assumptions on the interaction network, what classes of player preferences admit PAC stable outcomes? 

We make use of tools from computational learning theory in order to analyze hedonic coalition formation. We believe that as a research paradigm, this is a useful and important methodological approach. Hedonic games (and cooperative games in general) have, by and large, seen sparse application. Other game-theoretic methods have been successfully applied by taking a problem-oriented approach (e.g. stable matching for resident-hospital allocation \cite[Chapter 1.1]{kleinberg2006algorithm}, or Stackelberg games in the security domain~\cite{tambe2011security}); a concrete problem modeled and solved by a hedonic game framework has not yet been identified, to the best of our knowledge; this is despite the wealth of potential application domains, and rich data environments available nowadays (in particular, social network datasets would be particularly agreeable to the type of analysis presented in this work). Our work makes a fundamental connection between data, community structure, and game-theoretic solution concepts; a connection that we hope will result in a more applicable model of strategic collaborative behavior.

\bibliographystyle{aaai}
\bibliography{abbshort,learninghedonicrefs}

\end{document}